\documentclass[11pt]{article}
\usepackage[affil-it]{authblk}

\usepackage{color}
\usepackage{amsmath,amsfonts,amssymb,mathrsfs,amsthm}
\usepackage{graphicx}
\usepackage{listings}
\usepackage{enumerate,enumitem}
\usepackage{mathtools}
\usepackage[a4paper, total={6in, 8in}]{geometry}
\usepackage{cite}
\usepackage[colorlinks,linkcolor=blue,citecolor=blue]{hyperref}
\usepackage{empheq}
\usepackage{pdfpages}
\usepackage[titletoc]{appendix}
\usepackage[utf8]{inputenc}
\usepackage{xspace}
\numberwithin{equation}{section}

\newcommand{\Eqref}[1]{Eq.~\eqref{#1}}
\newcommand{\Eqsref}[1]{Eqs.~\eqref{#1}}

\newcommand{\Sectionref}[1]{Section~\ref{#1}}
\newcommand{\subfig}[2]{ Fig.~\hyperref[#1]{\ref{#1}#2}}

\newcommand{\T}{\mathcal{T}}

\newtheorem{Thm}{Theorem}
\newtheorem{Cor}{Corollary}
\newcommand{\Defref}[1]{Definition~\ref{#1}}
\newtheorem{defn}{Definition}

\newcounter{mnotecount}[section]
\let\oldmarginpar\marginpar
\renewcommand\marginpar[1]{\-\oldmarginpar[\raggedleft\footnotesize #1]%
	{\raggedright\footnotesize #1}}

\title{A general formalism for coupling scalar fields to the Einstein equations without a variational principle}
\author{J. Ritchie\footnote{Email: josh.ritchie@otago.ac.nz}}
\affil{Department of Mathematics and Statistics, University of Otago, New Zealand.}

\begin{document}
	\maketitle
	\begin{abstract}
		The purpose of this work is to discuss how matter fields are coupled to gravity within the framework of General Relativity. Our particular focus here is on the coupling of scalar field models. In a first step, we suggest a new method for coupling scalar fields to the Einstein equations \emph{without} the use of a variational principle or Lagrangian. We show that, under the appropriate assumptions, this new method (for coupling scalar fields to gravity) reproduces the minimally and $k$-essence scalar field couplings with a non-zero potential. We therefore interpret this formalism as describing a \emph{generic} method for coupling scalar fields to gravity. The approach described here allows for a number of free fields which we interpret as constitutive freedoms. In a second step, we choose these free fields in such a way that the resulting system is somehow ``near minimal''. In this setting we investigate Bianchi I type solutions. We establish conditions under which the solutions are asymptotically Kasner, near the initial singularity, and investigate their stability properties.    
	\end{abstract}
	\section{Introduction}
	Matter fields play a key role within the framework of General Relativity (GR), providing physical sources that cause spacetime to curve thereby generating a gravitational field \cite{Gravitation}. Indeed, there are many types of matter fields employed in GR, that allow researchers to study a diverse range of matter-driven phenomena. For example, perfect fluids are often employed to model astrophysical and cosmological matter\cite{Wilson:1945}, including stars \cite{Bayin:1982,Pfister:2011,Liebling:2017}, accretion disks \cite{Korobkin:2011,Kim:2018}, large-scale cosmic evolution \cite{Andersson:2021,Tavakoli:2019,WainwrightEllis1997,BeyerMarshallOliynyk:2023}, and stable Big Bang formation \cite{BeyerMarshallOliynyk:2024,BeyerOliynyk:2023_2,BeyerOliynyk:2024,BeyerOliynyk:2026}. In addition, \emph{imperfect} fluids can be used to account for thermodynamic effects such as viscosity and heat conduction \cite{Israel:1979,Maartens:1995,Maartens:1996}. Conversely, Electromagnetic fields --governed by the Maxwell equations-- are central to the study of charged black holes \cite{deOliveira:1994,McMaken:2023}, magnetized plasmas \cite{giacomazzo:2012}, and charged neutron stars \cite{Ghezzi:2005,Rossetto:2023}.  
	
	Of particular interest for the present work are \emph{scalar field} models. This type of matter occupies a particularly prominent position in the standard model cosmology largely due to its central role in \emph{inflation}. In the standard model of inflationary cosmology, a scalar field --commonly referred to as the ``inflaton''-- is introduced to provide a dynamical mechanism responsible for generating a period of accelerated expansion in the early universe commonly referred to as ``inflation'' \cite{Guth:1981,Lazarides:2001}. Inflation ends with reheating, during which the inflatons energy is transferred into field excitations that thermalize into a relativistic plasma and drive radiation-dominated expansion \cite{Amin:2015}. During this period of thermalization, the scalar field plays a central role in generating the density perturbations that seed the formation of large-scale structure in the early universe \cite{Sriramkumar:2009,Sasaki:2012,Olle:2020,Cyncynates:2021,Mahbub:2023}. Moreover, scalar fields are often studied in the context of stable Big Bang formation \cite{RodnianskiSpeck:Linear,RodnianskiSpeck:NonLinear,LapseScalarField,Ritchie2022}. This is largely due to the fact that a canonical scalar field with a time-like gradient behaves like a stiff perfect fluid (that is, a fluid whose speed-of-sound is equal to the speed-of-light) \cite{Isenberg1990}. This correspondence means that, near the Big Bang, scalar fields can suppress the so called mixmaster oscillations, that are typically present when a stiff fluid is not included \cite{MixMasterUniverse,OscillatoryApproach:BKL,Beguin2010,BeguinDutilleul2023}. 
	
	There are several different types of scalar field models that have been studied throughout the literature. These include (but are not limited to) minimally-, conformally-, and $k$-essence-coupled scalar fields \cite{Guth:1981,Faraoni:2013,Armendariz-Picon:2001}. Arguably the most important of these is the minimally coupled scalar field which was initially introduced as a solution of the so called \emph{horizon problem} \cite{Guth:1981}. We refer the interested reader to \cite{Liddle:1999,Sloan:2007} for a detailed discussion. Similarly, non-canonical $k$-essence scalar field models have been introduced as a possible solution to the \emph{cosmic coincidence problem} \cite{Armendariz-Picon:2001}.
	
	These models have been undeniably successful and their importance should not be understated. However, that does not mean these approaches do not come without their own challenges. Early minimally coupled scalar field models --which are of particular interest for our work here-- struggled with the \emph{Graceful Exit problem}, where the universe was unable to properly transition out of inflation. This problem was resolved by the introduction of the \emph{slow roll conditions} \cite{Linde:1982,Guth:1984,Beyer:2013}. 
	
	Despite the success of single-field slow-roll models, several theoretical and observational challenges have led researchers to introduce models involving multiple interacting scalar fields \cite{Kaiser2016,Bertolami:2012,GarciaArroyo:2024}. From a particle physics perspective, high-energy frameworks such as String Theory or Supergravity naturally predict a ``landscape'' populated by numerous scalar fields, often referred to as moduli or dilatons \cite{Wands:2007,berglund2009multi}. In these contexts, it is arguably more natural to consider inflation as a collective process driven by multiple fields rather than a single, isolated inflaton. Furthermore, multi-field setups like \emph{hybrid inflation} offer a more sophisticated resolution to the end of inflation; here, one field drives the expansion while a second ``waterfall'' field triggers a rapid phase transition, ensuring a complete reheating process that avoids the lingering issues of simpler models \cite{gong2011waterfall}.
	
	Moreover, multi-field models resolve the rigid observational constraints of the single-field paradigm. While single-field inflation typically predicts purely adiabatic perturbations \cite{Weinberg:2008}, multi-field interactions allow for the existence of \emph{isocurvature} (or entropy) perturbations \cite{hwang2002cosmological}. These interactions also provide a mechanism for generating detectable non-Gaussian statistical signatures in the Cosmic Microwave Background \cite{Wands:2007}.
	
	Crucially, all of these approaches rely on a Lagrangian formulation, from which both the equations of motion and the energy-momentum tensor are derived. While this framework is powerful, it excludes certain types of non-linear phenomena that are mathematically characteristic of non-variational systems. These include, but are not limited too, certain classes of instabilities and feedback mechanisms that cannot be derived from a standard action principle. This limitation raises the following question: Is it possible to couple a more general class of scalar field models to gravity without the use of a Lagrangian formulation? We are not the first to ask this question. Recent work by Gao et. al. has demonstrated that certain non-Lagrangian scalar field couplings can indeed be consistently embedded within the framework of GR \cite{Gao:2010}. In \cite{Gao:2010}, Gao et. al. showed that it is possible to consistently couple scalar fields to gravity without starting from a Lagrangian and constructed an energy-momentum tensor directly from the field, rather than from a variational principle. However, their approach has some notable limitations. First, they mostly focused on single-field models with a particular type of derivative coupling; meaning that many more general forms of scalar dynamics are not included. Second, while their formalism provides a specific method for generating an energy-momentum tensor, it does not fully characterize the space of allowed couplings. i.e., it does not systematically identify all the freedoms available when constructing non-Lagrangian scalar fields in GR. These limitations leave open the question of whether a truly general framework for non-Lagrangian scalar fields can be formulated. The goal of the present work is to develop one such framework in such a way as to ensure that, under the appropriate reductions, known scalar field couplings (such as the minimal and $k$-essence couplings) are recovered. 
	
	To do this, we ``split'' the energy momentum tensor into two pieces which we refer to as the \emph{kinetic} and \emph{potential} parts. In this framework, the scalar fields equation of motion and the potential part (of the energy momentum tensor) are treated as freely specifiable, while the kinetic part is determined as a solution of the Bianchi identity. This divergence-free condition (of the energy-momentum tensor) is under-determined and as such it is necessary to choose some portion of the kinetic part of the energy momentum tensor. A large portion of our work here is dedicated to discussing exactly this issue. In particular, by making use of the $(3+1)$-decomposition we express the Einstein+matter equations as a Cauchy problem and establish under what conditions our approach here leads to a well-posed initial value problem.  
	
	Our approach here is of course mathematically valid. However, its \emph{physical relevance} is unclear. Indeed, there are arguments in the literature that suggest, as a matter of principle, that scalar fields ``should'' arise from a Lagrangian formulation (see, for example, \cite{Durrer:2008}). This is a valid critique of our framework. However, it is worth noting that the standard cosmological model relies heavily on the perfect fluid model, which itself is often treated phenomenologically rather than being derived from a fundamental action. Our approach here extends this established phenomenological liberty to the scalar field sector, treating it as an effective medium governed by constitutive relations rather than a strictly variational entity. Moreover, the physical relevance of such an approach cannot be established \emph{before} it has been investigated.
	
	This paper is organised as follows. In \Sectionref{Sec:A_generic_framework_for_coupling_scalar_fields_to_gravity} we introduce our framework for coupling matter to gravity. In a first step, in \Sectionref{Sec:A_method_for_coupling_scalar_fields_to_gravity}, we discuss our approach for any type of matter. We then focus in on a single scalar field whose dynamics is governed by a wave-type equation. Our focus on a single scalar field here simplifies much of the discussion. However, it should be emphasised that our approach can easily be generalised to include multiple interacting matter fields. In \Sectionref{Sec:The_(3+1)-decomposition_and_evolution_equations} we then discuss the $(3+1)$-decomposition and establish under what conditions the resulting system produces a well-posed Cauchy problem. As mentioned above, our framework allows for a number of ``free data''. In \Sectionref{Sec:Bianchi_I} we make specific choices of said free data and investigate the corresponding class of Bianchi I solutions. In particular, we demonstrate that, under appropriate conditions, these particular choices permit the existence of Kasner-type solutions. In addition, we show that these solutions are non-linearly stable under perturbation.

	\section{A generic framework for coupling scalar fields to gravity}
	\label{Sec:A_generic_framework_for_coupling_scalar_fields_to_gravity}
	In this section we describe our framework for coupling scalar fields to the Einstein equations. In \Sectionref{Sec:A_method_for_coupling_scalar_fields_to_gravity} we describe our generic framework which essentially splits the energy-momentum tensor into kinetic and potential pieces. We then provide a number of examples demonstrating that this approach is consistent with standard scalar field models. In \Sectionref{Sec:The_(3+1)-decomposition_and_evolution_equations} we discuss the $(3+1)$-decompositions and establish under what conditions our matter model produces a well-posed initial value problem. Finally, in \Sectionref{Sec:Reductions} we show that minimal and $k$-essence scalar field models (with a non-zero potential) can be recovered within our formalism.  
	
	\subsection{A method for coupling scalar fields to gravity}
	\label{Sec:A_method_for_coupling_scalar_fields_to_gravity}
	In this section here we discuss the Einstein equations and describe our matter coupling procedure. To that end, we consider a globally hyperbolic, $4$-dimensional\footnote{Although we restrict our attention here to $4$-dimensional spacetimes, our approach can also work for $n$-dimensional spacetimes.}, smooth Lorentzian manifold $(M, g_{\alpha\beta})$ where $g_{\alpha\beta}$ is a smooth Lorentzian metric (i.e., a metric with signature $(-,+,+,+)$). Note here that we use Greek letters for indices that run from $0,\dots, 3$. The Einstein equations --in geometric units ($c=8\pi G=1$ for the speed of light $c$ and the gravitational constant $G$)-- can be written as\footnote{Here, the prefix ${}^{(4)}$ refers the number of dimensions.}, 
	\begin{align}
		\prescript{(4)}{}{R}_{\mu\nu}  = T_{\mu\nu}-\frac{1}{2}Tg_{\mu\nu},
		\label{Eq:EFEs}
	\end{align}
	where $\prescript{(4)}{}{R}_{\mu\nu},\prescript{(4)}{}{R}$ are the Ricci tensor and scalar (associated with $g_{\mu\nu}$), respectively, $T_{\mu\nu}$ is the energy momentum tensor of the matter fields with $T=g^{\mu\nu}T_{\mu\nu}$ and where $\nabla_{\mu}$ is the unique Levi-Civita connection associated with $g_{\mu\nu}$. In regards to the matter sector, let $u^I$ be a collection of $N$-unknowns with $I=1,\dots,N$ that constitute the matter content of our spacetime\footnote{It should be understood that we have not yet restricted ourselves to scalar fields models. The collection $u^I$ can be scalar fields and/or the components of a vector field or some other tensor field.} $(M, g_{\alpha\beta})$. Given $u^I$ we pick some symmetric tensorial operator $V_{\mu\nu}=V_{\mu\nu}[u^I,g_{\mu\nu}]$ whose components are determined in terms of $u^I$ and its derivatives $\nabla_{\mu}u^I$, with the property that ${\nabla_{\mu}{V}^\mu}_\nu\ne 0$. Given such a tensor, we write $T_{\mu\nu}$ as 
	\begin{subequations}
		\begin{align}
			T_{\mu\nu}=\T_{\mu\nu}-V_{\mu\nu},
			\label{Eq:GeneralT}
		\end{align}
		where $\T_{\mu\nu}$ is a symmetric tensor ($\T_{\mu\nu}=\T_{\nu\mu}$) whose components are determined from the requirement that $T_{\mu\nu}$ is divergence-free. i.e., 
		\begin{align}
			\nabla^{\mu}\T_{\mu\nu} = \nabla^{\mu}V_{\mu\nu}. 
			\label{Eq:Div_Tdd}
		\end{align}
	\end{subequations}
	One may think of the tensor $V_{\mu\nu}[u^I,g_{\mu\nu}]$ as somehow representing the \emph{potential} coupling. In the case of a minimally coupled scalar field (so that $u^I=\{\phi\}$), it is exactly the potential $V(\phi)g_{\mu\nu}$. Conversely, $\T_{\mu\nu}$ represents the \emph{dynamic} or \emph{kinetic} part of the energy momentum tensor. From this perspective, one thinks of \Eqref{Eq:Div_Tdd} as a transfer equation between the potential and kinetic sectors; the non-vanishing divergence of $V_{\mu\nu}$ acts as a source term for the kinetic sector $\T_{\mu\nu}$, ensuring that the total energy-momentum remains divergence-free as required by the Bianchi identities. Observe carefully, however, that \Eqref{Eq:Div_Tdd} is a set of $4$-equations, which are intended to determine the $10$-components of $\T_{\mu\nu}$. It follows then that the system is under-determined and hence some part of $\T_{\mu\nu}$ must be specified \emph{before} \Eqref{Eq:Div_Tdd} can be solved. This leads us to the following two questions: (1) How should the ``potential part'' $V_{\mu\nu}[u^I,g_{\mu\nu}]$ be chosen? (2) What part of $\T_{\mu\nu}$ should be considered as freely specifiable? To address these questions we must first make a specific choice of matter field(s). So, suppose that there is a real-valued scalar field $\phi:M\rightarrow\mathbb{R}$ which constitutes the matter content of the universe (i.e., $u^I=\{\phi\}$). The equation of motion for $\phi$ is a freely specifiable. Indeed, there are many ways in which one could choose $\phi$ leading to a diverse range of dynamics. However, for the sake of simplicity, in this paper here we shall restrict our attention to scalar fields that satisfy a wave equation of the form
	\begin{align}
		B^{\mu\nu}\nabla_{\mu}\nabla_{\nu}\phi + \tilde{B}^{\mu}\nabla_{\mu}\phi = f(\phi,\mathcal{X}),
		\quad
		\mathcal{X} = -\frac{1}{2}g^{\mu\nu}\nabla_{\mu}\phi\nabla_{\nu}\phi,
		\label{Eq:EoM}
	\end{align}
	where $B^{\mu\nu}$ is a Lorentz type metric, $\tilde{B}^{\mu}$ is an arbitrary $4$-vector, and $f(\phi,\mathcal{X})$ is a freely specifiable function. Note that $B^{\mu\nu}$ and $\tilde{B}^\mu$ are allowed to depend on $\phi$ and its first derivatives $\nabla_\mu\phi$. Scalar field couplings of this type (in which $\phi$ is given by \Eqref{Eq:EoM} and $T_{\mu\nu}$ by \Eqref{Eq:GeneralT}) appear commonly throughout the literature, although it is not typically discussed in this way. To justify this statement, we now consider three specific examples. 
	
	\paragraph{Example: Minimally coupled scalar field.} For our first example we show that a minimally coupled scalar field is consistent with our approach here. This model is perhaps the most common way to couple scalar fields to gravity and, as was noted above, plays a significant role the standard model of cosmology. This type of matter is typically derived from a Lagrangian and is therefore conservative (variational). Within our framework, the appropriate equation of motion can be obtained by setting
	\begin{subequations}
		\label{Eq:Minimial}
		\begin{align}
			B^{\mu\nu} = g^{\mu\nu},
			\quad
			\tilde{B}^\mu = 0,
			\quad
			V_{\mu\nu}=V(\phi)g_{\mu\nu},
			\quad
			f(\phi,\mathcal{X})=V^\prime(\phi).	
			\label{Eq:MinimalFreeData}
		\end{align} 
		Then, in this case, one readily checks that 
		\begin{align}
			\T_{\mu\nu}=\nabla_{\mu}\phi\nabla_{\nu}\phi -\frac{1}{2}\nabla_{\sigma}\phi\nabla^{\sigma}\phi\, g_{\mu\nu},
			\label{Eq:MinimallyCoupled}
		\end{align}
		is a \emph{particular} solution of \Eqref{Eq:Div_Tdd}. This energy momentum tensor (along with the particular equation of motion \Eqref{Eq:MinimalFreeData}) describes the standard minimally coupled scalar field. 
	\end{subequations}
	
	\paragraph{Example: $k$-essence scalar field.} For our second example, we show that $k$-essence scalar field models are consistent with our approach here. This particular type of scalar field model --which is typically derived from a Lagrangian-- was initially introduced to solve the cosmic coincidence problem (see, for example, \cite{velten2014aspects} for an explanation of cosmic coincidence and \cite{Vikman:2007} for a discussion of $k$-essence). Within our framework, the appropriate equation of motion is obtained by setting
	\begin{subequations}
		\begin{align}
			B^{\mu\nu} = -\frac{\partial V}{\partial\mathcal{X}} g^{\mu\nu},
			\quad
			\tilde{B}^\mu = -\nabla^{\mu}\left( \frac{\partial V}{\partial\mathcal{X}} \right),
			\quad
			V_{\mu\nu}=V\, g_{\mu\nu}
			\quad
			f(\phi,\mathcal{X})=\frac{\partial V}{\partial\phi}.	
		\end{align}
		Note that, for the sake of readability, we have suppressed the arguments of $V=V(\phi,\mathcal{X})$. In this case, the equation of motion \Eqref{Eq:EoM} can be written as 
		\begin{align}
			\nabla_{\mu}\left( \frac{\partial V}{\partial\mathcal{X}}\nabla^\mu\phi \right) = \frac{\partial V}{\partial\phi}.
			\label{Eq:Box_q}
		\end{align}
		Then, one readily checks that 
		\begin{align}
			\T_{\mu\nu}=\frac{\partial V}{\partial \mathcal{X}}(\phi,\mathcal{X})\nabla_{\mu}\phi\nabla_{\nu}\phi,
			\label{Eq:kessence}
		\end{align}
		is a \emph{particular} solution of \Eqref{Eq:Div_Tdd}. This energy momentum tensor (along with the particular equation of motion \Eqref{Eq:Box_q}) describes a $k$-essence scalar field. 
	\end{subequations}
	\paragraph{Example: Unified dark-matter/dark-energy scalar field model.} For our third example we demonstrate that Gao's \emph{unified dark-matter/dark-energy scalar field model} can also be obtained within our framework here. Unlike the minimally-coupled case, this model has received comparatively less attention in the literature, despite its notable feature that it is not generated from a Lagrangian formulation. In \cite{Gao:2010}, the relevant energy conservation equations --derived as a consequence of \Eqref{Eq:Div_Tdd}-- were established under the special assumption of spatial homogeneity and isotropy. Nevertheless, their scalar field equation of motion can be recovered within our framework by setting
	\begin{subequations}
		\begin{align}
			B^{\mu\nu} = \frac{1}{2}\left( g^{\mu\nu} - \tilde{u}^\mu\tilde{u}^\nu \right),
			\quad
			\tilde{B}^\mu =0,
			\quad
			f(\phi,\mathcal{X})=-\Lambda^{\prime}(\phi),
		\end{align}
		for some function $\Lambda(\phi)$, and where we have defined
		\begin{align}
			\tilde{u}^\mu = (2\mathcal{X})^{-1/2}\nabla^\mu\phi.
		\end{align}
		Comparing the formalism of \cite{Gao:2010} with ours yields the relation\footnote{It is worth briefly noting here that Gao et. al. claim that the tensor $V_{\mu\nu}$, as defined in \Eqref{Eq:Gao_Vdd}, is divergence free. The validity of this claim is not clear. Certainly it is true that $\nabla^\mu V_{\mu\nu}=0$ implies that their equation of motion must hold. However, the reverse implication is less clear. Nevertheless, if their claim is true, then their formalism is still consistent with our approach provided we set $V_{\mu\nu}=\Lambda(\phi)g_{\mu\nu}$. In this case  $\T_{\mu\nu}=\nabla_{\mu}\phi\nabla_{\nu}\phi/2$ is a particular solution of \Eqref{Eq:Div_Tdd}.}
		\begin{align}
			V_{\mu\nu} = \frac{1}{2}\nabla_{\mu}\phi\nabla_{\nu}\phi - \Lambda(\phi)g_{\mu\nu}.
			\label{Eq:Gao_Vdd}
		\end{align}
	\end{subequations}
	This reduction highlights that the model of Gao et al. corresponds to a particular choice of coupling within our framework, thereby illustrating that the non-Lagrangian character of such models is naturally accommodated within our broader constitutive framework.

	\paragraph{Our coupling choices.} To explore this framework further we must now make a particular choice for the tensor $V_{\mu\nu}$. To this end we define a potential function $V(\phi,\mathcal{X})$, which we often refer to as the ``coupling potential''. Given such a function, we write
	\begin{align}
		V_{\mu\nu} = V(\phi,\mathcal{X})g_{\mu\nu}.
	\end{align}
	This choice was motivated by, and is consistent with, both a minimally- and $k$-essence-coupled scalar field (provided the equation of motion is chosen appropriately), where the potential part of the energy-momentum tensor naturally takes a pure-pressure form proportional to the metric.

	\subsection{The $(3+1)$-decomposition and evolution equations}
	\label{Sec:The_(3+1)-decomposition_and_evolution_equations}
	We have now introduced our basic method for coupling matter fields to gravity and have made a specific choice of the potential part of the energy momentum tensor. However, in order to fully close the system given by \Eqref{Eq:Div_Tdd} it remains for us to pick some piece of the ``kinetic part'' $\T_{\mu\nu}$. There is of course no clear physically preferable way to do this. To discuss this further we make use of the $(3+1)$-decomposition. In \Sectionref{Sec:3+1_decomp_adm} we recall some basic facts about the $(3+1)$-decomposition and derive the ADM equations. The $(3+1)$-decomposition of the conservation equations \Eqref{Eq:Div_Tdd} and the equation of motion \Eqref{Eq:EoM} is then described in \Sectionref{Sec:3+1_decomp_div}. This allows us to write \Eqsref{Eq:EFEs}--\eqref{Eq:EoM} as an initial value problem and lets us identify which components of $\T_{\mu\nu}$ should be specified. We emphasize that the choices made here are not unique and should be understood as a first suggestion only. We finish this subsection in \Sectionref{Sec:Well-posedness_and_restrictions_on_the_free_data} where we discuss well-posedness.   
	
	\subsubsection{The $(3+1)$-decomposition of the Einstein equations}
	\label{Sec:3+1_decomp_adm}
	Given a spacetime $(M,g_{\alpha\beta})$, which is a solution of the Einstein equations \Eqsref{Eq:EFEs}, we suppose that there exists a smooth function $t:M\rightarrow \mathbb{R}$ whose collection of level sets $\Sigma_t$ forms a foliation $\Sigma$ of $M$. This foliation yields a decomposition of $(M,g_{\alpha\beta})$ in the standard way. The unit co-normal of any $3$-surface $\Sigma_t\in\Sigma$ is 
	\begin{subequations}
		\begin{align}
			n_{\mu}=\alpha \nabla_{\mu}t,
			\label{eq:def_Na}
		\end{align}
		where $\alpha>0$ is the \emph{lapse}. The induced first and second fundamental forms are therefore,
		\begin{align}
			\gamma_{\mu\nu}=g_{\mu\nu}+n_{\mu}n_{\nu},
			\label{Eq:Metriic_Decomp}
		\end{align}
		and
		\begin{align}
			K_{\mu\nu}=-\frac{1}{2}\mathcal{L}_{n}\gamma_{\mu\nu},
		\end{align}
		respectively. Here, $\mathcal{L}_n$ represent the Lie derivative in the direction of $n^\mu$. The covariant derivative associated with $\gamma_{\mu\nu}$ is $D_{\alpha}$. The tensor field
		\begin{align}
			{\gamma^\mu}_{\nu}={\delta^\mu}_{\nu}+n^{\mu}n_{\nu},
		\end{align}
	\end{subequations}
	is the map that projects any tensor defined at any point in $M$ orthogonally to a tensor that is tangent to some $\Sigma_t$. If each index of a tensor field defined on $M$ contracts to zero with $n_{\mu}$ or $n^{\mu}$, then we call that field \textit{spatial}. Given an arbitrary tensor field on $M$ we can create a spatial tensor field on $\Sigma_t$ by contracting each index with ${\gamma^\alpha}_\beta$. In fact, any tensor can be uniquely decomposed into its intrinsic and its orthogonal parts, e.g.
	\begin{subequations}
		\begin{align}
			T_{\mu\nu}=\rho n_{\mu}n_{\nu}+n_{\mu}j_{\nu}+n_{\nu}j_{\mu}+S_{\mu\nu},
		\end{align}   
		where
		\begin{align}
			\label{eq:Tdec}
			\rho=n^{\nu}n^{\mu}T_{\mu\nu},
			\quad
			j_{\nu} = -{\gamma^\sigma}_{\nu}n^\mu T_{\mu\sigma},
			\quad
			S_{\mu\nu} = {\gamma^\sigma}_{\mu}{\gamma^\eta}_{\nu}T_{\sigma\eta}.
		\end{align}
		The equations in this subsection focus on the geometric part of the evolution only. From the perspective of the geometry, the fields $(\rho,j_\mu,S_{\mu\nu})$ appear as source terms only and as such that can be discussed separately. We shall discuss these fields in more detail in the next subsection.
	\end{subequations}
	
	Now pick an arbitrary vector field  $t^\mu$ such that 
	\begin{subequations}
		\begin{align}
			t^{\mu}D_{\mu}t = 1.
		\end{align}
		According to \Eqref{eq:def_Na} there must exist a unique spatial vector field $\beta^{\mu}$, called the \textit{shift}, such that
		\begin{align}
			t^\mu = \alpha n^{\mu} + \beta^{\mu}.
			\label{Eq:n_t}
		\end{align}	
	\end{subequations}
	Collectively, the lapse $\alpha$ and shift $\beta_\mu$ are commonly refereed to as the \emph{gauge}. 
	
	Given all this, we find that $K_{\mu\nu}$ and $\gamma_{\mu\nu}$ are solutions of the evolution equations
	\begin{subequations}
		\begin{align}
			\mathcal{L}_{n}{K}_{\mu\nu}=& -\alpha^{-1}D_{\mu}D_{\nu}\alpha + \left( {R}_{\mu\nu} + K{K}_{\mu\nu} - 2{K^\sigma}_{\mu}K_{\sigma\nu} \right) + \frac{1}{2}\left( (S-\rho){\gamma}_{\mu\nu} - 2 {S}_{\mu\nu} \right),
			\label{Eq:Evol_K}
			\\
			\mathcal{L}_{n}\gamma_{\mu\nu}=&-2{K}_{\mu\nu},
			\label{Eq:Evol_y}
		\end{align}
		where ${R^{\mu}}_\nu$ is the Ricci tensor associated with $\gamma_{\mu\nu}$, and where $K=\gamma^{\mu\nu}K_{\mu\nu}$ is the \emph{mean curvature}. Note that \Eqref{Eq:Evol_K} follows from the fully spatial projection of \Eqref{Eq:EFEs}, while \Eqref{Eq:Evol_y} is nothing more than the definition of the extrinsic curvature. Observe that \Eqsref{Eq:Evol_K}--\eqref{Eq:Evol_y} are the standard \emph{ADM evolution equations}. In addition we also get the \emph{ADM constraint equations} as
		\begin{align}
			2\rho - ( R + K^{2} - K_{\mu\nu}K^{\mu\nu} ) =&0,
			\label{Eq:Evol_Theta}
			\\
			j_{\mu}+ D^{\nu}K_{\mu\nu} - D_{\mu}K=&0,
			\label{Eq:Evol_X}
		\end{align}
	\end{subequations}
	where $R$ is the Ricci scalar. Note that \Eqsref{Eq:Evol_Theta} and \eqref{Eq:Evol_X} follow from the fully normal and mixed projections of \Eqref{Eq:EFEs}, respectively. \Eqsref{Eq:Evol_Theta} and \eqref{Eq:Evol_X} are often referred to as the Hamiltonian and momentum constraints, respectively.
	
	Observe that all quantities here  are smooth \emph{spatial} tensor fields. It is clear that while this means that all contractions with $n^\sigma$ or $n_\sigma$ vanish, contractions with $t^\sigma$ do not, e.g., $j_t:=j_\sigma t^\sigma=j^\sigma t_\sigma$ as a consequence of \Eqref{Eq:n_t}. However such ``components'' $j_t$ do clearly not constitute a further degree of freedom of the field $j_\sigma$ since $j_t=j_\sigma \beta^\sigma$ is fully determined by its ``spatial components''. Consistent with this, one easily checks that the equation for $j_t$ --obtained by contracting \Eqref{Eq:Evol_X} with $t^\sigma$-- fully decouples from the remaining equations. We remark that instead of thinking of each field in the Einstein equations (\Eqsref{Eq:Evol_K}--\eqref{Eq:Evol_X}) above as a spatial field on $M$, we could equivalently think of it as a $1$-parameter family of fields on $\Sigma_t$ defined by the pull-back along the $t$-dependent map $\Lambda_t: \Sigma_t\rightarrow M$, $p\mapsto (t,p)$ to $\Sigma_t$. In the following we shall use abstract indices  $a,b,\ldots$ for such $t$-dependent tensor fields  on $\Sigma_t$. Indeed, all indices $\mu,\nu,\ldots$ in the Einstein equations above could be replaced by $a,b,\ldots$, and, at the same time, each Lie-derivative along $t^\sigma$ by the derivative with respect to parameter $t$. All this is well-known for $(3+1)$-decompositions of spacetimes and is therefore not discussed any further here.

	We end with a brief discussion of well-posedness. Whether or not \Eqsref{Eq:Evol_K}--\eqref{Eq:Evol_X} lead to a well-posed initial value problem depends on the particular choice of the \emph{gauge} $(\alpha,\beta_\mu)$. It is well known, for example, that the Constant Mean Curvature (CMC) gauge with zero shift ensures that \Eqsref{Eq:Evol_K}--\eqref{Eq:Evol_X} has a well-posed initial value problem \cite{RodnianskiSpeck:Linear,RodnianskiSpeck:NonLinear}. At this stage, we do not make any particular choice of gauge $(\alpha,\beta_\mu)$ as we wish to leave our discussion as general as possible. Thus, for the remainder of this section, we shall assume that the gauge has been chosen in such a way as to ensure that \Eqsref{Eq:Evol_K}--\eqref{Eq:Evol_X} have a well-posed Cauchy problem. Given this, it only remains to show that the matter sector --described by the fields $(\phi,\rho,j_\mu,S_{\mu\nu})$-- also have a well-posed initial value problem. 
	
	\subsubsection{The $(3+1)$-decomposition of the conservation equations}
	\label{Sec:3+1_decomp_div}
	We now use the structure provided by the $(3+1)$-decomposition, described above, to discuss the kinetic part of the energy momentum tensor $\T_{\mu\nu}$. To this end we write  
	\begin{subequations}
		\begin{align}
			\T_{\mu\nu}=\kappa n_{\mu}n_{\nu} + j_{\mu}n_{\nu} + n_{\mu}j_{\nu} + q_{\mu\nu},
		\end{align}
		where 
		\begin{align}
			\kappa = n^{\sigma}n^{\eta}\T_{\sigma\eta},
			\quad
			j_{\mu} = -{\gamma^\sigma}_{\mu}n^\eta\T_{\sigma\eta},
			\quad
			q_{\mu\nu} = {\gamma^\sigma}_{\mu}{\gamma^\eta}_{\nu}\T_{\sigma\eta}.
			\label{eq:T_dec}
		\end{align}
		Observe carefully that the mixed projections of $T_{\mu\nu}$ and $\T_{\mu\nu}$ are the same since ${\gamma^\mu}_{\sigma}n^\nu g_{\mu\nu}=0$. Comparing now \Eqref{eq:T_dec} to \Eqref{eq:Tdec} we find 
		\begin{align}
			\rho = \kappa + V(\phi,\mathcal{X}),
			\quad
			S_{\mu\nu} = q_{\mu\nu} - V(\phi,\mathcal{X})\gamma_{\mu\nu}. 
		\end{align}	
		For later convenience, we write $q_{\mu\nu}$ in two parts:  
		\begin{align}
			q_{\mu\nu}=Q_{\mu\nu}+\frac{1}{3}q\gamma_{\mu\nu},
			\quad
			Q = \gamma^{\mu\nu}Q_{\mu\nu},
		\end{align}
	\end{subequations}
	where $Q_{\mu\nu}$ is an arbitrary symmetric tensor (i.e., $Q_{\mu\nu}=Q_{\nu\mu}$), and $q$ is a freely specifiable function. Given all of this, we find that the \emph{conservation equations} \Eqref{Eq:Div_Tdd} can be written as 
	\begin{subequations}
		\begin{align}
			\mathcal{L}_{n}\kappa + D^{\mu}j_{\mu}   =\, & \frac{3\kappa + q}{3} K - 2\alpha^{-1} j^\mu D_{\mu}\alpha +  Q_{\mu\nu}K^{\mu\nu}  - \mathcal{L}_{n}V(\phi,\mathcal{X}),
			\label{Eq:Evol_kappa_qdd}
			\\
			\mathcal{L}_{n}j_{\sigma} + \frac{1}{3}D_{\sigma}{q} =\, &  D_{\sigma}V(\phi,\mathcal{X}) - \frac{3\kappa + q}{3\alpha}D_{\sigma}\alpha + K j_{\sigma} - D^{\mu}Q_{\mu\sigma} - \alpha^{-1}Q_{\mu\sigma}D^{\mu}\alpha.
			\label{Eq:Evol_pd_qdd}
		\end{align}
	\end{subequations}
	This system ``naturally'' forms a set of evolution equations for the variables $\kappa$ and $j_\sigma$. However, in order to solve \Eqsref{Eq:Evol_kappa_qdd}--\eqref{Eq:Evol_pd_qdd} one must first make specific choices of the function $q$ and the tensor $Q_{\mu\nu}$.  
	
	Turning our attention to the equation of motion for $\phi$, \Eqref{Eq:EoM}, we write
	\begin{subequations}
		\begin{align}
			\nabla_{\alpha}\phi = -\nu n_{\alpha} + w_\alpha,
			\quad
			\tilde{B}^\alpha = -r n^\alpha + \hat{r}^{\alpha},
		\end{align}
		where 
		\begin{align}
			\nu = n^{\sigma}\nabla_{\sigma}\phi,
			\quad
			w_\alpha = {\gamma^{\sigma}}_{\alpha}\nabla_{\sigma}\phi,
			\quad
			{r}=n_{\sigma}\tilde{B}^\sigma,
			\quad
			\hat{r}^{\alpha} = {\gamma^\alpha}_\sigma \tilde{B}^\sigma.
		\end{align}
	\end{subequations}
	In terms of the variables $(\nu,w_\alpha)$ we find that $\nabla_{\alpha}\nabla_\beta\phi$ can be written as 
	\begin{align}
		\begin{split}
			\nabla_{\alpha}\nabla_\beta\phi = \left( \mathcal{L}_n \nu - \omega_{\nu}a^\nu \right)n_\alpha n_\beta &- \left( \mathcal{L}_{n}w_\beta + w^\sigma K_{\sigma\beta} - \nu a_\sigma \right)n_{\alpha} 
			\\
			&- \left( D_{\alpha}\nu + w^{\sigma}K_{\sigma\alpha} \right)n_{\beta}+\nu K_{\alpha\beta} + D_{\mu}w_{\nu},
		\end{split}
		\label{Eq:Hessian_decomp}
	\end{align}
	Note here that smoothness of $\phi$ implies $\nabla_{\alpha}\nabla_{\beta}\phi = \nabla_{\beta}\nabla_{\alpha}\phi$. This requirement leads to an evolution equation for $w_\sigma$, which can be interpreted as an integrability condition. This equation is given below, in \Eqref{Eq:IntegrabilityCondition}. Turning our attention to the matrix $B^{\alpha\beta}$ we write
	\begin{subequations}
		\begin{align}
			B^{\alpha\beta} = \mathring{b}\left( n^{\alpha}n^{\beta} - 2\hat{b}^{\left( \alpha\right.}n^{\left. \beta\right)} + b^{\alpha\beta}\right),
			\label{Eq:BUU_decomp}
		\end{align} 
		with 
		\begin{align}
			\mathring{b} = n_{\alpha}n_{\beta} B^{\alpha\beta},
			\quad
			\hat{b}^{\alpha} = -(\mathring{b})^{-1}{\gamma^\alpha}_\sigma n_\beta B^{\sigma\beta},
			\quad
			b^{\alpha\beta} = (\mathring{b})^{-1}{\gamma^\alpha}_\sigma{\gamma^\beta}_\iota B^{\sigma\iota}.
		\end{align}
	\end{subequations}
	Contracting the decomposed Lorentz metric \Eqref{Eq:BUU_decomp} with \Eqref{Eq:Hessian_decomp} (and the vector $\tilde{B}^\mu$ with $\nabla_\mu\phi$) allows us to derive an evolution equation for $\nu$ in terms of $(3+1)$ variables. The resulting first order system is 
	\begin{subequations}
		\label{Eq:EOM_Gen_3+1}
		\begin{align}
			&\mathcal{L}_{n}\phi = \nu,
			\\ 
			&\mathcal{L}_{n}w_{c} = D_{c}\nu + \nu a_{c}, 
			\label{Eq:IntegrabilityCondition}
			\\
			\begin{split}
				&\mathcal{L}_{n}\nu={\mathring{b}}^{-1}\left( r\nu -\hat{r}^\alpha w_\alpha + f(\phi,\mathcal{X}) \right) + w^\sigma a_\sigma -2 \hat{b}^\alpha D_{\alpha}\nu  - b^{\alpha\beta}D_{\alpha}w_{\beta}
				\\
				&\quad\quad\quad- (2\hat{b}^{\alpha}w^{\beta} + \nu b^{\alpha\beta})K_{\alpha\beta}
			\end{split}
		\end{align}
	\end{subequations}
	We note here that, since $B^{\alpha\beta}$ is a symmetric Lorentz type metric, \Eqref{Eq:EOM_Gen_3+1} is well-posed.
	
	\subsubsection{Well-posedness and restrictions on the free data}
	\label{Sec:Well-posedness_and_restrictions_on_the_free_data}
	As stated above, in the present work, we assume	that $\phi$ is determined by \Eqref{Eq:EoM}. However, this equation is phenomenological, and it is certainly possible for $\phi$ to satisfy a much more general evolution equation. The precise choice of the equation of motion can, in principle, affect the well-posedness properties of \Eqsref{Eq:Evol_kappa_qdd}--\eqref{Eq:Evol_pd_qdd}. The purpose of this subsection	is to discuss exactly this issue.
	
	For our discussion here the fields $B^{\alpha\beta}$ and $\tilde{B}^\beta$ may depend on $\kappa$ or $j_\sigma$ but \emph{not} their derivatives\footnote{We emphasise that more general assumptions are of course possible. The choices made here ensure that the principal part of \Eqref{Eq:EoM} decouples from the the principal part of \Eqsref{Eq:Evol_kappa_qdd}--\eqref{Eq:Evol_pd_qdd}. If $B^{\alpha\beta}$ and $\tilde{B}^\beta$ are allowed to depend on derivatives of $\kappa$ or $j_\sigma$ then the principal parts do not decouple and care should be taken to ensure that the resulting system is well-posed.}. This means that the principal part of \Eqref{Eq:EoM} effectively decouples from the principal part of \Eqsref{Eq:Evol_kappa_qdd}--\eqref{Eq:Evol_pd_qdd}, and hence from the
	perspective of \Eqsref{Eq:Evol_kappa_qdd}--\eqref{Eq:Evol_pd_qdd}, the scalar field
	$\phi$ may be regarded as a \emph{given function}.
	
	Turning now to the evolution equations for $\kappa$ and $j_\sigma$, we suppose that $Q_{\mu\nu}$ \emph{does not} depend on $\kappa$ or $j_\mu$. It may however, possess a dependence on $\phi,\nu$ or $w_\sigma$ (but not their derivatives). In addition the function $q$ is assumed to depend on $\kappa$ \emph{only} (i.e., $q=q(\kappa)$). Given these assumptions we find that \Eqref{Eq:Evol_pd_qdd} can now be written as 
	\begin{align}
		\begin{split}
			\mathcal{L}_{n}j_{\sigma} + \frac{1}{3}q^\prime(\kappa)D_{\sigma}\kappa =\, &  D_{\sigma}V(\phi,\mathcal{X}) - \frac{3\kappa + q}{3\alpha}D_{\sigma}\alpha + K j_{\sigma} - D^{\mu}Q_{\mu\sigma} - \alpha^{-1}Q_{\mu\sigma}D^{\mu}\alpha.
		\end{split}
		\label{Eq:Evol_pd_qdd_2}
	\end{align}
	with ${q}^\prime=d{q}/d\kappa$. Note that the principal part of \Eqref{Eq:Evol_kappa_qdd} is unchanged. It turns out that \Eqsref{Eq:Evol_kappa_qdd} and \eqref{Eq:Evol_pd_qdd_2} form a symmetric hyperbolic system with symmetrizer 
	\begin{subequations}
		\begin{align}
			\frac{1}{3}\left(
			\begin{array}{cc}
				3 & 0 \\
				0 & {q}^{\prime}(\kappa){\gamma^{\sigma}}_{\nu}
			\end{array}
			\right),
		\end{align}
		provided
		\begin{align}
			{q}^{\prime}(\kappa)\ge0,
			\label{Eq:Hyperbolicity}
		\end{align} 
	\end{subequations}
	where $\gamma^{\sigma\mu}$ is the spatial inverse of $\gamma_{\sigma\mu}$ so that $\gamma^{\mu\sigma}\gamma_{\sigma\nu}=\text{diag}(0,1,1,1)$. Physically, this condition ensures that the effective speed of sound of the scalar medium $c_s^2=q^\prime(\kappa)/3$ remains real, thereby preventing the catastrophic growth of high-frequency instabilities. We refer to \Eqref{Eq:Hyperbolicity} as the \emph{hyperbolicity condition}. It follows that for arbitrary free data, for which the hyperbolicity condition \Eqref{Eq:Hyperbolicity} holds, \Eqsref{Eq:Evol_kappa_qdd} and \eqref{Eq:Evol_pd_qdd_2} is a quasilinear strongly-hyperbolic system and the \emph{Cauchy problem} in both the increasing \emph{and} decreasing $t$-directions is well-posed (at least locally). Note that, in the special case $q^\prime(\kappa)=0$, \Eqsref{Eq:Evol_kappa_qdd} and \eqref{Eq:Evol_pd_qdd_2} decouple in leading order. If we also have that $D_\mu\alpha=0$ (as in the case for geodesic slicing) then the equations completely decouple and can therefore be solved consecutively (i.e., one would solve \Eqref{Eq:Evol_pd_qdd_2} and then \Eqref{Eq:Evol_kappa_qdd}). However, if $D_{\mu}\alpha\neq0$ then the system is only weakly hyperbolic.

	Given all of this, \Eqsref{Eq:Evol_kappa_qdd} and \eqref{Eq:Evol_pd_qdd_2} now suggest the following groupings: 
	\begin{description}		
		\item[Free Data:] The symmetric tensor ${Q}_{ab}$ and the function $V(\phi)$ are freely specifiable functions of $\phi$ (and its derivatives) everywhere on $M$ provided that $V(\phi)$ depends only on $\phi$ and $Q_{ab}$ does not depend on the unknowns. Similarly the field ${q}$ is a freely specifiable function of $\kappa$ everywhere on $M$.
		\item[Equation of state:] The field $q(\kappa)$, which is a freely specifiable everywhere on $M$ subject to the condition $q^\prime(\kappa)\ge 0$, plays the role of an \emph{equation of state}, similar to what is seen in constitutive theories such as hydrodynamics.		
		\item[Unknowns:] The scalar and vector fields $\kappa$ and $j_{\sigma}$ are the unknowns. Given appropriate initial	data, the task is to determine these as solutions of \Eqsref{Eq:Evol_kappa_qdd}--\eqref{Eq:Evol_pd_qdd}. Note here that, given appropriate free fields, all coefficients in these equations are determined as functions of the scalar field and its derivatives everywhere on $M$.
	\end{description}
	
	Note that one could allow $q$ to also depend on $\phi$ (or its derivatives). The resulting calculation is nearly identical to the one above and the corresponding hyperbolicity condition is $\partial q/\partial\kappa>0$. This is of course more general, however, the assumption $q=q(\kappa)$ is sufficient for our purposes here. Moreover, if $q$ is allowed to depend on $j_\sigma$ the resulting system could still be hyperbolic. However, such a dependence would alter the principal part of the equations and hence care should be taken to ensure that the system remains hyperbolic.

	\subsection{Reductions}
	\label{Sec:Reductions}
	We now discuss some particular choices for the free fields. The purpose of these examples is to demonstrate that, under appropriate restrictions, a number of standard scalar field models are contained within our formalism. In particular, we show that both the minimally coupled scalar field and the k-essence scalar field --discussed in \Sectionref{Sec:Comparison_to_a_minimally_coupled_scalar_field} and \Sectionref{Sec:Reduction_to_k-essence_scalar_fields}, respectively-- arise as special cases. The examples presented here will form the basis for our subsequent discussion of the role played by the free fields $(q(\kappa),Q_{ab})$ in \Sectionref{Sec:Role_of_the_free_data}.

	\subsubsection{Reduction to a minimally coupled scalar field}
	\label{Sec:Comparison_to_a_minimally_coupled_scalar_field}
	
	We now show that the framework introduced above reduces to the standard case of a minimally coupled scalar field. In particular, we demonstrate that for a suitable choice of free data, the evolution system \Eqsref{Eq:Evol_kappa_qdd}--\eqref{Eq:Evol_pd_qdd_2} admits solutions corresponding to the usual scalar-field energy-momentum tensor (see 	\Eqref{Eq:MinimallyCoupled}). To this end, we pick the fields $B^{\mu\nu},\tilde{B}^\mu$, and $f(\phi,\mathcal{X})$ as in \Eqref{Eq:MinimalFreeData}. For these choices \Eqsref{Eq:EOM_Gen_3+1} reduce to
	\begin{subequations}
		\begin{align}
			\mathcal{L}_{n}\nu = D_{a}w^a + w^c \alpha^{-1}D_c \alpha + \nu K - V^\prime(\phi), 
			\quad
			\mathcal{L}_{n}w_{c} = D_{c}\nu + \nu a_{c}, 
			\quad
			\mathcal{L}_{n}\phi = \nu.
			\label{Eq:BoxDecomp}
		\end{align}
		The equations of motion given in \Eqref{Eq:BoxDecomp} are not yet sufficient to close the system \Eqsref{Eq:Evol_kappa_qdd} and \eqref{Eq:Evol_pd_qdd_2}. For this we must first make specific choices of the free data. In this subsection we set
		\begin{align}
			{q}(\kappa) = 3\kappa,
			\quad
			Q_{ab} = w_{a}w_{b} -  w_{c} w^{c} \gamma_{ab} .
			\label{Eq:FreeData_Q_q}
		\end{align}
	\end{subequations}
	Notice that ${q}^\prime(\kappa)=3>0$, and hence the hyperbolicity condition
	\Eqref{Eq:Hyperbolicity} is satisfied. Moreover, given these choices, and assuming
	that $(\phi,\nu,w_a)$ satisfies \Eqref{Eq:BoxDecomp}, we find that
	\begin{align}
		\kappa = \frac{1}{2}\left( w_{a} w^{a} + \nu^2 \right),
		\qquad
		j_a = -\nu w_a,
		\label{Eq:ParticualSol_minimal}
	\end{align}
	is a \emph{particular} solution of \Eqsref{Eq:Evol_kappa_qdd}--\eqref{Eq:Evol_pd_qdd_2}. It follows that, for this choice of free data, the variables $(\kappa,j_a)$ reproduce the standard $(3+1)$-decomposition of the energy-momentum tensor of a minimally coupled scalar field.
	
	\subsubsection{Reduction to $k$-essence scalar fields}
	\label{Sec:Reduction_to_k-essence_scalar_fields}
	
	For our second example, we show that the framework introduced above contains $k$-essence type scalar field models. In particular, we demonstrate that for a suitable choice of free data, the evolution system \Eqsref{Eq:Evol_kappa_qdd}--\eqref{Eq:Evol_pd_qdd_2} admits solutions corresponding to the usual $k$-essence scalar-field energy-momentum tensor. For this, we suppose that the fields $B^{\mu\nu},\tilde{B}^\mu$, and $f(\phi,\mathcal{X})$ are chosen as
	\begin{subequations}
		\begin{align}
			B^{\mu\nu} = -\frac{\partial V}{\partial\mathcal{X}} g^{\mu\nu},
			\quad
			\tilde{B}^\mu = -\nabla^{\mu}\left( \frac{\partial V}{\partial\mathcal{X}} \right),
			\quad
			f(\phi,\mathcal{X})=\frac{\partial V}{\partial\phi}.	
		\end{align}
		Note that, for the sake of readability, we have suppressed the arguments of $V=V(\phi,\mathcal{X})$. In this case, the equation of motion \Eqref{Eq:EoM} can be written as 
		\begin{align}
			\nabla_{\mu}\left( \frac{\partial V}{\partial\mathcal{X}}\nabla^\mu\phi \right) = \frac{\partial V}{\partial\phi}
			\label{Eq:Box_q}
		\end{align}
		As before, this equation alone is not sufficient to close \Eqsref{Eq:Evol_kappa_qdd} and \eqref{Eq:Evol_pd_qdd_2}. For the free data, we pick 
		\begin{align}
			q(\kappa) = 3\kappa, 
			\quad
			Q_{ab} = \left( \nu^2 \gamma_{ab} - w_{a}w_{b}  \right)\frac{\partial V}{\partial\mathcal{X}}
		\end{align}
	\end{subequations}
	Notice that ${q}^\prime(\kappa)=3>0$, and hence the hyperbolicity condition
	\Eqref{Eq:Hyperbolicity} is satisfied. Moreover, given these choices, and assuming
	that $(\phi,\nu,w_a)$ satisfies \Eqref{Eq:Box_q}, we find that
	\begin{align}
		\kappa = -\nu^2 \frac{\partial V}{\partial\mathcal{X}},
		\qquad
		j_a = \nu \frac{\partial V}{\partial\mathcal{X}}w_{a},
		\label{Eq:ParticualSol_k}
	\end{align}
	is a \emph{particular} solution of \Eqsref{Eq:Evol_kappa_qdd} and \eqref{Eq:Evol_pd_qdd_2}. It follows that, for this choice of free data, the variables $(\kappa,j_a)$ reproduce the standard $(3+1)$-decomposition of the energy-momentum tensor of a $k$-essence scalar field.
	
	\subsection{Role of the free data}
	\label{Sec:Role_of_the_free_data}
	We now discuss the fields $(q(\kappa),Q_{ab})$. For this, we begin by noting that in \Sectionref{Sec:Comparison_to_a_minimally_coupled_scalar_field} (and in \Sectionref{Sec:Reduction_to_k-essence_scalar_fields}) we assumed that $q(\kappa)=3\kappa$. This is not the only possible choice of $(q(\kappa),Q_{ab})$ that gives rise to the solution \Eqref{Eq:MinimallyCoupled} (or \Eqref{Eq:ParticualSol_k}). This is a consequence of the fact that $(\kappa,j_a)$ is determined \emph{before} the free data has been chosen. As a result, for \emph{any} choice of $q(\kappa)$ (which satisfies the hyperbolicity condition \Eqref{Eq:Hyperbolicity}), there exists some choice of $Q_{ab}$ that leads to the solution \Eqref{Eq:ParticualSol_minimal}. This observation is generic. To understand why, suppose that $(\mathring{\kappa},\mathring{j}_a)$ is a known solution of \Eqsref{Eq:Evol_kappa_qdd} and \eqref{Eq:Evol_pd_qdd_2} corresponding to the free fields $(q(\kappa),Q_{ab})$. Then, $(\mathring{\kappa},\mathring{j}_a)$ is \emph{also} a solution corresponding to the free fields $(\hat{q}(\kappa),\hat{Q}_{ab})$ where
	\begin{align}
		\hat{Q}_{ab} = Q_{ab} + \frac{1}{3} (q(\mathring{\kappa})-\hat{q}(\mathring{\kappa}))\gamma_{ab}.
		\label{Eq:ConstituativeFreedom}
	\end{align}
	In other words, different choices of the function $q(\kappa)$ can be compensated for by altering the choice of $Q_{ab}$ \emph{without} changing the background solution $(\mathring\kappa,\mathring{j}_a)$. This freedom should be understood as part of the definition of the matter model rather than as an ambiguity in the solution itself. The quantities $q(\kappa)$ and $Q_{ab}$ do not affect whether a given configuration satisfies the field equations, but they do determine how the matter variables $(\kappa,j_a)$ respond dynamically. In this sense, the free data play a role analogous to constitutive relations in continuum matter models, such as the Navier-Stokes equations; there, the conservation of momentum is a universal law, but the specific behaviour of the fluid --whether it is water, honey, or a non-Newtonian substance-- is determined by the constitutive choice of the viscosity tensor. In our case, $q(\kappa)$ and $Q_{ab}$ characterize the ``constitutive response'' of the scalar medium. One immediate consequence of this observation is that properties such as linear stability are not intrinsic properties of the background configuration alone. Although the background solution is unchanged under variations of the free data, the corresponding linearized evolution equations for $(\kappa,j_\sigma)$ depend explicitly on the choice of $(q(\kappa),Q_{ab})$. As a result, the characteristic structure, and hence the stability properties, may differ for different admissible choices of free data, even when evaluated about the same background solution. In standard variational models, these degrees of freedom are ``frozen'' by the choice of a Lagrangian. In the case of a minimally coupled scalar field (or $k$-essence scalar field), the constitutive relations are fixed by the requirement that $\kappa$ and $j_a$ be algebraically determined in terms of $(\phi,\nu,w_a)$. This removes the freedom described above and uniquely ties the dynamical response of the matter sector to the scalar-field evolution. From the present perspective, these models therefore correspond to highly constrained special cases within a broader class of admissible matter models, rather than to a generic outcome of the conservation equations alone. 
	
	\section{Bianchi I solutions for ``near-minimal'' scalar fields}
	\label{Sec:Bianchi_I}
	We have seen now that variational models (such as the minimally coupled scalar field) are contained within our formalism subject to some algebraic constraint. The purpose of this section here is to explore this further. To that end we suppose that free data and the equation of motion are somehow  ``close'' to a minimally coupled scalar field (the details of our choices are discussed in \Sectionref{Sec:Our_free_data_choices}). In addition, we restrict our attention to the class of spatially homogeneous Bianchi I solutions. This class of cosmologies is amongst the simplest possible type of spacetimes and is therefore an appropriate setting for our initial explorations here. A derivation of the Bianchi I equations is presented in \Sectionref{Sec:The_Bianchi_1_equations}. Solutions of the Bianchi I system are then discussed in \Sectionref{Sec:Bianchi_1_solutions}. In \Sectionref{Sec:Kasner_Solutions} we first suppose we have a minimally coupled scalar field with zero potential and derive the \emph{Kasner solutions}. These are amongst the simplest known anisotropic cosmologies and play a key role in mathematical cosmology largely due to their role in the BKL conjecture and stable Big Bang formation. In this setting we are able to explicitly explore how the algebraic constraints \Eqref{Eq:ParticualSol_minimal} appear in the exact solutions. 
	
	\subsection{Our free data choices}
	\label{Sec:Our_free_data_choices}
	As stated above the goal of this section is to investigate scalar field solutions which are constructed via our formalism. For this we must now make specific choices for the various free fields. We do this in two steps. First, we make a specific choice of topology underlying the foliation $\Sigma=\{\Sigma_t\}$ and introduce a coordinate system that is adapted to this foliation. Our choices here are discussed in \Sectionref{Sec:Topological_considerations}. Second, we make specific choices of the free fields $Q_{ab}$ and $q(\kappa)$ as well as on the fields $B^{\alpha\beta},\tilde{B}^\alpha,f(\phi)$, and $V(\phi)$. These choices are discussed in \Sectionref{Sec:Constitutive_freedoms}.
	
	\subsubsection{Topological considerations and gauge choices} 
	\label{Sec:Topological_considerations}
	We begin by specifying the underlying topology of the space-times $(M,g_{\mu\nu})$ studied in this work. To that end, suppose now we have chosen a smooth time function $t:M\rightarrow\mathbb{R}$, with the properties discussed above, giving rise to a foliation $\Sigma=\{\Sigma_t\}$ whose level sets are diffeomorphic to some compact $3$-surface $\tilde{\Sigma}$. i.e.,
	\begin{subequations}
		\begin{align}
			M = I\times \tilde{\Sigma},
			\label{Eq:M_topology}
		\end{align}
		for some time-interval $I$. We write the points in the foliation $\Sigma$ as $(t,p)$ with $t\in I$ and $p\in\tilde{\Sigma}$. Observe carefully that we often use the same symbol $t$ for the real parameter $t\in I$ as well as for the \emph{function} $t(p)$ that defines the $(3+1)$-decomposition. In some sense the particular choice of topology for $\tilde{\Sigma}$ is irrelevant as our focus in the section is on Bianchi I solutions only. In this setting, any of the unknown fields defined on $M$ can be equivalently thought of as existing on the $1$-dimensional interval $I\subset M$ with coordinate $t$. In this picture, given any point $p\in\tilde\Sigma$, one thinks of the fields defined on $I$ as the pull-back to $I$ along the ($p$-dependent) map $\Phi_p : I \rightarrow M, t\mapsto (t, p)$. $I$ therefore acts as the ``effective manifold'' of our analytical solutions. Here, we shall assume that there is some $T>0$ so that 
		\begin{align}
			I = \left(0,T\right].
		\end{align} 
	\end{subequations}
	In regards to the lapse and shift (which encode the coordinate freedoms between consecutive hypersurfaces \cite{Alcubierre:Book}) we adopt the CMC gauge with zero shift. In this setting we pick shift $\beta^i$ and mean curvature $K$ as
	\begin{subequations}
		\begin{align}
			K = -1/t,
			\quad
			\beta^{i} = 0,
			\label{Eq:shift}
		\end{align}
		while the lapse is determined as a solution of the PDE 
		\begin{align}
			\gamma^{ab}D_{a}D_{b}\alpha =\alpha\left( R + \frac{1}{t^2} + \frac{1}{2}(Q+q-3\kappa)-3V(\phi) \right)-\frac{1}{t^2}.
			\label{Eq:Evol_alpha}
		\end{align}
	\end{subequations}
	
	\subsubsection{Constitutive freedoms and the equation of motion}
	\label{Sec:Constitutive_freedoms}
	The fields $Q_{ab}$ and $q$ are considered as freely specifiable in \Eqsref{Eq:Evol_kappa_qdd} and \eqref{Eq:Evol_pd_qdd_2}. There is of course no clear physically or geometrically preferable way to choose these freedoms. Given that our interest in the section is on models that are somehow close to a minimally coupled scalar field we choose our free data as in \Eqref{Eq:FreeData_Q_q}. Moreover, in regards to $B^{\alpha\beta}$ and $\tilde{B}^\mu$ we pick these fields as in \Eqref{Eq:MinimalFreeData}. In this setting, the equation of motion is 
	\begin{align}
		g^{\mu\nu}\nabla_{\mu}\nabla_{\nu}\phi = f(\phi).
	\end{align}
	These choices therefore allow us to study a system that is similar to the standard minimally coupled scalar field, specifically utilizing the constitutive freedom identified in \Eqref{Eq:ConstituativeFreedom} to allow the forcing term $f(\phi)$ and the potential $V(\phi)$ to evolve independently.

	\subsection{The Bianchi I equations}
	\label{Sec:The_Bianchi_1_equations}
	We now write down the evolution equations (corresponding to the choices described above) in the special case of Bianchi I spacetimes. In this setting, all of the unknowns (i.e., $\phi,\nu,\kappa$, and the functional components of $\gamma_{ab}$ and $K_{ab}$) depend only on the time coordinate $t$. Given this assumption it is a consequence of the momentum constraint \Eqref{Eq:Evol_X} that $j_a=0$. This particular $j_a$ is also a solution of the conservation equation \Eqref{Eq:Evol_pd_qdd_2}. Moreover, in this setting, \Eqref{Eq:Evol_alpha} becomes an algebraic equation and can therefore be solved explicitly to get 
	\begin{align}
		\alpha = \frac{1}{1-t^2 V(\phi)}.
	\end{align}
	{Observe carefully} that we have used that, in this setting, $Q_{ab}=0\implies Q=0$, which is a direct consequence of the assumption that ${Q}_{ab}$ depends only on spatial derivatives of $\phi$ (see \Eqref{Eq:FreeData_Q_q}). Given this, it is now convenient to split $K_{ab}$ into its trace and trace-free parts. i.e, 
	\begin{subequations}
		\label{Eq:Bianchi1Reduction}
		\begin{align}
			{K^{a}}_{b}={\chi^{a}}_{b}-\frac{1}{3t} {\gamma^{a}}_{b},
			\quad
			\chi^a{}_{a}=0,
			\quad
			{K^{a}}_{a}=-{1}/{t}.
		\end{align}
		In regards to the metric, we suppose that there is a scalar function $\Omega:M\rightarrow \mathbb{R}$ and a tensor $\tilde{\gamma}_{ab}$ such that 
		\begin{align}
			\gamma_{ab}=\Omega^{2}\tilde{\gamma}_{ab},
			\quad
			\tilde{\gamma}_{ab} = \text{diag}( \gamma_{1},\gamma_{2},\gamma_{3} ).
			\label{Eq:SpecialMetric}
		\end{align}
	\end{subequations}
	Note that we have restricted our attention to diagonal metrics. It should be noted that this is not a significant restriction as, for spatially homogeneous solutions, one can always perform a (local) coordinate transformation so that the metric $\gamma_{ab}$ is diagonal. 
	
	Given all of this we find that the evolution equations for the first and second fundamental form can be written as   
	\begin{subequations}
		\label{Eqs:HomoEvolve}
		\begin{align}
			\partial_{t}\Omega = \frac{1}{3t}\alpha \Omega,
			\quad
			\partial_{t}\tilde{\gamma}_{ab}=-2\alpha\tilde{\gamma}_{ac}{{\chi^c}_{b}},
			\quad
			\partial_{t}{\chi^{a}}_{b}= -\frac{1}{t}\alpha {\chi^{a}}_{b}.
		\end{align}
		The evolution equations for $\tilde{\gamma}_{ab}$ and ${\chi^a}_b$ can be further reduced by setting 
		\begin{align}
			\gamma_{i} = \tilde{\gamma}_{i}\exp\left( -2{C_i}p(t) \right),
			\quad
			\chi^a{}_b = \chi(t)\text{diag}(C_1,C_2,C_3),
		\end{align}
		where, for each $i=1,2,3$, we have that $\tilde{\gamma}_{i},C_i\in\mathbb{R}$ are integration constants, and where the functions $p(t)$ and $\chi(t)$ are solutions of the evolution equations 
		\begin{align}
			\partial_{t}p(t) = \alpha \chi(t),
			\quad
			\partial_{t}\chi(t)=-\frac{1}{t}\alpha \chi(t).
			\label{Eq:p_and_u}
		\end{align}
		Since ${\chi^a}_b$ is trace-free, we must have 
		\begin{align}
			C_1 + C_2 + C_3 = 0.
			\label{Eq:C_i}
		\end{align}
		
		In addition to all this, we find that the evolution equations for $\nu$ and $\phi$ are  
		\begin{align}
			\partial_{t}\nu = -\frac{1}{t}\alpha \nu - \alpha f(\phi),
			\quad
			\partial_{t}\phi =  \alpha\nu.
			\label{Eq:nu_phi_eqs_Bianchi1}
		\end{align}
		The remaining evolution equation is 
		\begin{align}
			\partial_{t}\kappa = -\frac{2}{t}\alpha\kappa - \partial_{t}V(\phi),
		\end{align}
		and the Hamiltonian constraint is
		\begin{align}
			\begin{split}
				\kappa =& \frac{1}{3t^2} - \frac{1}{2}\underbrace{(C_1^2+C_2^2+C_3^2)}_{:=C^2}\chi^2 - V(\phi)= \frac{1}{3t^2} - \frac{1}{2}C^2\chi^2 - V(\phi),
			\end{split} 
			\label{Eq:Hamiltonian_Bianchi_1}
		\end{align}
		One can check via direct calculation that $\kappa$, as given by the Hamiltonian constraint, is always a solution of the evolution equation for $\kappa$ and hence $\kappa$ can be calculated \emph{after} $\chi$ and $\phi$ have been determined.
	\end{subequations}

	\subsection{Bianchi I solutions}
	\label{Sec:Bianchi_1_solutions}
	We now investigate solutions of the Bianchi I equations \Eqsref{Eqs:HomoEvolve}. In \Sectionref{Sec:Kasner_Solutions} we set $f(\phi)=V(\phi)=0$ and construct the corresponding solutions. In this setting our model coincides with the standard minimally coupled scalar field model and produces the well-known Kasner spacetimes if and only if the algebraic constraint 
	\begin{align}
		\kappa=\frac{1}{2}\nu^2,
		\label{Eq:AlgebraicConstraint}
	\end{align}
	is imposed. If this constraint is \emph{not} satisfied then the solutions represent a generalisation of the Kasner scalar field spacetimes. The case of $V(\phi),f(\phi)\neq 0$ is discussed in \Sectionref{Sec:NearMinimal_Kasner_Solutions}, wherein we establish conditions on $f(\phi),V(\phi)$ so that the resulting solutions are ``asymptotically Kasner''. i.e, can be matched to a Kasner solution near the initial singularity at $t=0$. 
	
	\subsubsection{Minimally coupled scalar field: Kasner spacetimes}
	\label{Sec:Kasner_Solutions}
	We first construct exact solutions (of \Eqsref{Eqs:HomoEvolve}) under the assumption that 
	\begin{align}
		V(\phi)=0,
		\quad
		f(\phi)=0.
	\end{align}
	It should be noted here that these choices are not entirely consistent with the framework described in \Sectionref{Sec:A_method_for_coupling_scalar_fields_to_gravity}. This is because, as a part of our set-up, we assumed that the potential part (of the energy momentum tensor) $V_{\mu\nu}$ is \emph{not} divergence free. If $V(\phi)=0$ then $V_{\mu\nu}=0$ which is trivially divergence free.  Nevertheless, this limit is instructive because it reveals that the dynamical coupling usually attributed to the Einstein equations is, in the Lagrangian case, partially enforced by the specific algebraic form of the energy momentum tensor. In particular, this setting is useful for understanding \emph{why} $V_{\mu\nu}$ should be divergence free. We find that our scalar field model is equivalent to the standard minimally coupled scalar model if and only if the algebraic constraint \Eqref{Eq:AlgebraicConstraint} holds. If $V(\phi)=0$ then one readily checks that 
	\begin{align}
		\alpha=1.
	\end{align}
	Using this in \Eqref{Eq:p_and_u} now gives 
	\begin{align}
		\chi(t) = \frac{\chi_\star}{t},
		\quad
		p(t)=p_\star+\chi_\star\ln(t), 
	\end{align}
	where $\chi_\star,p_\star\in\mathbb{R}$ are integration constants. The constant $\chi_\star$ can be ``absorbed'' into the $C_i$'s and hence, without loss of generality, we set $\chi_\star=1$. Similarly, $p_\star$ can be absorbed into the $\tilde{\gamma}_{i}$'s and hence we set $p_\star=0$. In addition to this, we find that  volume element $\Omega$ is 
	\begin{align}
		\Omega = \Omega_\star t^{1/3}.
	\end{align}
	where $\Omega_\star\in\mathbb{R}$. Once again we note that, without loss of generality, we can set $\Omega_\star=1$. Given all of this we find that the spatial metric $\gamma_{ab}$ can be written as 
	\begin{align}
		\gamma_{ab}=\text{diag}\left( t^{2p_1}, t^{2p_2}, t^{2p_3} \right),
	\end{align}
	where we have defined the \emph{Kasner exponents} $p_i$ as 
	\begin{align}
		p_i = C_i + \frac{1}{3}\implies p_1 + p_2 + p_3 = 1.
	\end{align}
	Turning our attention to the scalar field equations \Eqsref{Eq:nu_phi_eqs_Bianchi1} we get 
	\begin{align}
		\nu = \frac{\nu_\star}{t},
		\quad
		\phi = \phi_\star + \nu_\star\ln(t). 
	\end{align}
	Finally, \Eqref{Eq:Hamiltonian_Bianchi_1} gives
	\begin{align}
		\kappa=\frac{1}{3t^2} - \frac{1}{2t^2}\left( p_1^2 + p_2^2 + p_3^2 - \frac{1}{3} \right).
	\end{align}
	This solution is very similar to the standard Kasner scalar field solution. However, it is different in one key aspect: The solution, as presented here, \emph{does not} impose a condition on the square of the Kasner exponents. Such a condition typically arises as a consequence of the Hamiltonian constraint and can be restored within our approach here by imposing \Eqref{Eq:AlgebraicConstraint}. We find that \Eqref{Eq:AlgebraicConstraint} holds if and only if 
	\begin{align}
		p_1^2 + p_2^2 + p_3^2 = 1 - \nu_\star^2.
	\end{align}
	This condition is therefore a consequence of the algebraic constraint \Eqref{Eq:AlgebraicConstraint}. 
	
	Finally, let us comment on the requirement that the potential part $V_{\mu\nu}$ is divergence free. This condition ensures that the geometry is directly coupled to the scalar field's dynamics. If this condition is not imposed then it is possible to construct metrics whose dynamics is unaffected by scalar field perturbations. This can be seen directly from the above discussion, where we found that $\phi$ was coupled to the Kasner exponents only if one also imposes the additional condition \Eqref{Eq:AlgebraicConstraint}.
	
	\subsubsection{Near-minimal scalar field: Existence of Kasner-type spacetimes}
	\label{Sec:NearMinimal_Kasner_Solutions}
	We now discuss solutions for which $V(\phi),f(\phi)\neq 0$. In particular we investigate the question \emph{under what conditions can solutions be asymptotically matched to a Kasner scalar field solution}? To address this question it is useful to first introduce a notion of ``asymptotically Kasner'' solutions.
	\begin{defn}
		\label{Def:Kasner}
		Consider the spatially homogeneous fields $(\Omega,\chi,p,\nu,\phi)$ which are solutions of \Eqsref{Eqs:HomoEvolve} corresponding to some choice of $V(\phi),f(\phi)$ on the interval $I=\left(0,T\right]$ for some $T>0$. If there are constants $\nu_\star,\chi_\star,\Omega_{\star},p_\star,\phi_\star\in\mathbb{R}$ and functions $\tilde{\nu},\tilde{\chi},\tilde{\Omega},\tilde{p},\tilde{\phi}:\bar{I}\rightarrow\mathbb{R}$ such that 
		\begin{subequations}
			\begin{align}
				\nu(t) = (\nu_\star  + \tilde{\nu}(t))t^{-1},
				\\
				\chi(t) = (\chi_\star + \tilde{\chi}(t))t^{-1},
				\\
				\Omega(t) = (\Omega_{\star} + \tilde{\Omega}(t))t^{1/3},
				\\
				p(t) = p_\star + \chi_\star\ln(t) + \tilde{p}(t)
				\\
				\phi(t) = \phi_\star + \nu_\star\ln(t) + \tilde{\phi}(t),		
			\end{align}
			with 
			\begin{align}
				\lim_{t\rightarrow0^+}\tilde{\nu}(t)=\lim_{t\rightarrow0^+}\tilde{\chi}(t)=\lim_{t\rightarrow0^+}\tilde{\Omega}(t)=\lim_{t\rightarrow0^+}\tilde{p}(t)=\lim_{t\rightarrow0^+}\tilde{\phi}(t)=0,
			\end{align}
			then, we say that this solution is ``asymptotically Kasner''.
		\end{subequations}
	\end{defn}
	The goal now is to establish under what conditions solutions (of the Bianchi I equations, \Eqsref{Eqs:HomoEvolve}) are ``asymptotically Kasner''. We use Fuchsian analysis to find and prove these conditions. A discussion of Fuchsian methods for ODEs can be found in Appendix~\ref{Append:Fuchsian}.	In order to apply Fuchsian methods we must first re-write \Eqsref{Eqs:HomoEvolve} in terms of decaying variables. For this we define
	\begin{subequations}
		\label{Eq:decaying_vars}
		\begin{align}
			P(t) = t p(t),
			\quad
			\sigma(t) = t\nu(t),
			\quad
			x(t)=t\chi(t),
			\quad
			\omega(t) = t^{-1/3}\Omega(t),
			\quad
			\varphi(t)=t\phi(t).
		\end{align}
		Moreover, we define
		\begin{align}
			v(t) = t^{2-\epsilon}V(\phi),
			\quad
			F(t) = t^{2-\mu}f(\phi),
			\label{Eq:v_F_def}
		\end{align}
		for some constants $\epsilon,\mu>0$.  In what follows we shall treat $v(t)$ and $F(t)$ as unknown functions of time each of which are determined via an evolution equation. Such an approach is self-consistent only if \Eqref{Eq:v_F_def} is considered as an algebraic constraint on the initial data.  Given all of this we find that the lapse $\alpha$ can now be written as 
		\begin{align}
			\alpha(t,v)=\frac{1}{1-3t^\epsilon v(t)}.
		\end{align}
		From this formula it is clear that a singularity occurs if there is a $t_\star\in\left(0,T\right]$ such that $v(t_\star)=1/(3t_\star^\epsilon)$. This is a coordinate singularity and is a consequence of our particular gauge choice (see discussions in \cite{Ritchie2022} for more details). In order to avoid this gauge breakdown we restrict ourselves to time intervals $I=[0,T]$ on which $v(t)$ is ``sufficiently small''. More precisely, let $B_{R}(\mathbb{R})$ be a ball with radius $R$ centred around $0$. Then for any interval $[0,T]$ we choose $R>0$ such that $\alpha:[0,T]\times B_{R}(\mathbb{R})\rightarrow\mathbb{R}^+$ is well defined and finite for all $t\in[0,T]$.    
	\end{subequations}
	
	We find that the resulting evolution equations (for the ``decaying variables'', \Eqsref{Eq:decaying_vars}) can be written as
	\begin{subequations}
		\label{Eq:BianchSystem}
		\begin{align}
			\partial_{t}\omega(t) = t^{-1+\epsilon}\alpha(t,v)v(t)\omega(t),
			\\
			\partial_{t}x(t) = -3t^{-1+\epsilon}\alpha(t,v)v(t)x(t),
			\\
			\partial_{t}P(t)=\frac{1}{t}P(t) + x(t) + 3t^{\epsilon}v(t)\alpha(t,v)x(t),
			\\
			\partial_{t}\varphi(t) = \frac{1}{t}\varphi(t) + \sigma(t) + 3t^{\epsilon}v(t)\alpha(t,v)\sigma(t),
			\\
			\partial_{t}\sigma(t) = -3t^{-1+\epsilon}\alpha(t,v)v(t)\sigma(t) - t^{-1+\mu}\alpha(t,v) F(t),
			\\
			\partial_{t}v(t)=\frac{2-\epsilon+\mathcal{V}(t^{-1}\varphi)\sigma(t)}{t}v(t) + {3t^{-1+\epsilon}\alpha(t,v)\mathcal{V}(t^{-1}\varphi)\sigma(t)}v(t)^2,
			\label{Eq:v_evol}
			\\
			\partial_{t}F(t)=\frac{2-\mu+\mathcal{F}(t^{-1}\varphi)\sigma(t)}{t}F(t) + {3t^{-1+\epsilon}\alpha(t,v)\mathcal{F}(t^{-1}\varphi)\sigma(t)}v(t)F(t),
			\label{Eq:F_evol}
		\end{align}
		where we have defined 
		\begin{align}
			\mathcal{V}(\phi) = \frac{V^\prime(\phi)}{V(\phi)},
			\quad
			\mathcal{F}(\phi) = \frac{f^\prime(\phi)}{f(\phi)}.
		\end{align}
	\end{subequations}
	It is in terms of the system \Eqsref{Eq:BianchSystem} that we shall study existence and stability of asymptotically Kasner solutions. Note that \Eqsref{Eq:v_evol} and \eqref{Eq:F_evol} ensure that the constraints \Eqref{Eq:v_F_def} are satisfied provided they hold at the initial time $t=T$. 
	
	We now give an existence result which establishes under what conditions \Eqsref{Eq:BianchSystem} permits asymptotically Kasner solutions. 
	\begin{Thm}
		\label{Thm:Existance}
		Consider the initial value problem \Eqsref{Eq:BianchSystem} defined on the interval $I=[0,T]$, where $T>0$ has been chosen such that there is some $R>0$ so that $\alpha:I\times B_{R}(\mathbb{R})\rightarrow\mathbb{R}^+$ is well-defined and finite for all $t\in I$, and suppose that the functions $f,V:(-\infty,\infty)\rightarrow\mathbb{R}$ have been chosen such that 
		\begin{subequations}
			\label{Eq:Sols}
			\begin{align}
				\left| \frac{f^{\prime}(\phi)}{f(\phi)} \right|\le \mathcal{C}_{f},
				\quad
				\left| \frac{V^{\prime}(\phi)}{V(\phi)} \right|\le \mathcal{C}_{V},
			\end{align}
			for some $\mathcal{C}_f,\mathcal{C}_V\in\mathbb{R}^+$. If there are constants $s,r,a,b,c,d,e>0$, and $\sigma_\star\in\mathbb{R}$ such that the inequalities
			\begin{align}
				s > \min\{ a, c, d, 2 + \mathcal{C}_{V}|\sigma_\star| \}-\epsilon,
				\quad
				r > 2 - \mu + \mathcal{C}_{f}|\sigma_\star|,
				\quad
				a>c,
				\quad
				b>e,
				\label{Eq:Existance_ineqaulities}
			\end{align}
		\end{subequations}
		are satisfied then there are constants $x_\star,\omega_{\star},\varphi_{\star},P_\star\in\mathbb{R}$ and unique globally continuously differentiable functions $\tilde{v}(t),\tilde{F}(t),\tilde{x}(t),\tilde{\sigma}(t),\tilde{\omega}(t),\tilde{\varphi}(t),\tilde{P}(t):I\rightarrow\mathbb{R}$ such that
		\begin{subequations}
			\label{Eq:Sol_Expansion}
			\begin{align}
				v(t) = t^{s}\tilde{v}(t),
				\\
				F(t) = t^{r}\tilde{F}(t),
				\\
				x(t) = x_{\star} + t^a\tilde{x}(t),
				\\
				\sigma(t) = \sigma_\star + t^{b}\tilde{\sigma}(t),
				\\
				\omega(t) = \omega_{\star} + t^{d}\tilde{\omega}(t),
				\\
				\varphi(t) = t\varphi_\star + \sigma_\star t\ln(t) + t^{1+e}\tilde{\varphi}(t),
				\\
				P(t) = t P_\star + x_\star t\ln(t) + t^{1+c}\tilde{P}(t),
			\end{align}
			are solutions of \Eqsref{Eq:BianchSystem} with the property that $\tilde{v}(t),\tilde{F}(t),\tilde{x}(t),\tilde{\sigma}(t),\tilde{\omega}(t),\tilde{\varphi}(t),\tilde{P}(t)\rightarrow0$ as $t\rightarrow0$.
		\end{subequations}
	\end{Thm}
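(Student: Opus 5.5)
We plan to prove Theorem~\ref{Thm:Existance} with the Fuchsian existence theorem for singular ODEs from Appendix~\ref{Append:Fuchsian}. That result gives a unique solution $U$ of
\[
\partial_t U = \tfrac{1}{t}\mathcal{A}U + \mathcal{G}(t,U),
\]
vanishing as $t\to0$, provided $\mathcal{A}$ has eigenvalues with positive real part and $\mathcal{G}$ is continuous in $t$, Lipschitz in $U$ near $0$, and $\mathcal{G}(t,0)=O(t^{-1+\delta})$ for some $\delta>0$.

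The first step is to substitute the expansions \Eqref{Eq:Sol_Expansion} into \Eqsref{Eq:BianchSystem}. This gives a system for $U=(\tilde v,\tilde F,\tilde x,\tilde\sigma,\tilde\omega,\tilde\varphi,\tilde P)$. The leading constants are then fixed as follows:
\begin{itemize}
\item $x_\star$ and $\sigma_\star$ are free data, with $\sigma_\star$ the number appearing in \Eqref{Eq:Existance_ineqaulities};
\item $\omega_\star$ is free;
\item $\varphi_\star$ and $P_\star$ are the free coefficients of the kernel of the operator $\partial_t-\tfrac1t$.
\end{itemize}
The $t\ln t$ terms are chosen precisely to cancel the constant sources $\sigma_\star$ and $x_\star$ in the $\varphi$ and $P$ equations.

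For $\tilde v$ we obtain
\[
\partial_t\tilde v = \frac{2-\epsilon-s+\mathcal V\,\sigma}{t}\,\tilde v + 3t^{-1+\epsilon+s}\alpha\,\mathcal V\,\sigma\,\tilde v^2 .
\]
Here we use the bound $|\mathcal V|\le\mathcal C_V$ and write $\sigma=\sigma_\star+t^b\tilde\sigma$. The term $\tfrac{1}{t}(\mathcal V\,t^b\tilde\sigma)\tilde v$ is then of order $t^{-1+b}$ and is harmless. The singular coefficient $(2-\epsilon-s+\mathcal V\sigma_\star)$ is time dependent through $\mathcal V(\phi)$, but its sign is controlled: it is bounded above by $2-\epsilon-s+\mathcal C_V|\sigma_\star|$. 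Integrating toward $t=0$ shows that $\tilde v$ stays zero (the zero solution is the singular one) when this quantity is negative, which is what the condition on $s$ encodes. The same argument applied to $\tilde F$ uses $r>2-\mu+\mathcal C_f|\sigma_\star|$.

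Since $\mathcal A$ is not constant here, we would either treat $\mathcal V(\phi)\sigma_\star\tilde v/t$ as a Lipschitz perturbation absorbed by the margin in the inequality, or use the version of the Fuchsian theorem that allows a $t$-dependent $\mathcal A(t)$ bounded below. The other equations give:
\begin{itemize}
\item $\partial_t\tilde x = -\tfrac{a}{t}\tilde x - 3t^{-1+\epsilon-a}\alpha v x$, where $v=t^s\tilde v$. The source is $O(t^{-1+\epsilon+s-a})$, which requires $s+\epsilon>a$.
\item The $\tilde\sigma$ equation is analogous, with an extra forcing $t^{-1+\mu+r-b}\alpha\tilde F$ that is integrable.
\item The $\tilde\omega$ equation requires $s+\epsilon>d$.
\item For $\tilde\varphi$ and $\tilde P$, the remainder equations have eigenvalue $-e$ (respectively $-c$) and sources $t^{b-e}\tilde\sigma$ (respectively $t^{a-c}\tilde x$) plus $v$-terms. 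These are integrable because $b>e$ and $a>c$, and because $s+\epsilon$ exceeds $c$.
\end{itemize}
We would read the $\min$ in \Eqref{Eq:Existance_ineqaulities} as requiring $s+\epsilon$ to dominate these exponents; this matches the hypotheses as stated. Because the negative eigenvalues $-a,-b,\dots$ have the wrong sign for a zero-limit Fuchsian problem, we instead write each such equation in integral form from $0$, e.g. $\tilde x(t)=-t^{-a}\int_0^t \tau^{a}(\dots)\,d\tau$, which is exactly the Fuchsian fixed point formulation.

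The second step is the fixed point itself. We define the map $\Phi$ on the Banach space of continuous $U$ on $[0,T']$ with $\sup|U|\le \rho$ and $U(0)=0$. Each component is given by the integral from $0$ of the right-hand side, with weights $t^{-a}\!\int_0^t\tau^{a-1+\delta}$. These integrals are $O(t^\delta)$, so $\Phi U\to0$ as $t\to0$. Smallness of the ball of radius $R$ ensures that $\alpha$ is smooth and bounded with bounded derivative in $v$, so $\Phi$ is a contraction for $T'$ small. This yields existence and uniqueness, and the $C^1$ regularity on $(0,T']$ follows from the equation. We then extend to $I$ by standard ODE continuation, using that $\alpha$ is finite on $I\times B_R$. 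Finally, the constraints \Eqref{Eq:v_F_def} propagate as noted after \Eqsref{Eq:BianchSystem}.

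We expect the main obstacle to be the $v$ and $F$ equations. Their singular coefficients depend nonlinearly on the unknown $\phi=t^{-1}\varphi$, which diverges like $\sigma_\star\ln t$, and the quadratic term carries the factor $\mathcal V(\phi)$. So $\mathcal G$ is Lipschitz in $U$ only if $\mathcal V$ and $\mathcal F$ are Lipschitz. We would add that assumption (or $f,V\in C^2$ with bounded $(\log V)''$). We would also handle the nonconstant $\mathcal A$ by an integrating factor $\exp\bigl(\int \mathcal V\sigma_\star\,dt/t\bigr)$, which the bound $\mathcal C_V|\sigma_\star|$ controls by $t^{\pm\mathcal C_V|\sigma_\star|}$. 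This integrating-factor argument is exactly where the threshold $2+\mathcal C_V|\sigma_\star|$ enters.
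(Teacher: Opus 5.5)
Your proposal is essentially the paper's proof. You make the same substitution of \Eqsref{Eq:Sol_Expansion} into \Eqsref{Eq:BianchSystem}, obtain the same diagonal singular part $-\frac{1}{t}\mathrm{diag}(d,a,b,e,c,\,s+\epsilon-2-\mathcal{V}\sigma_\star,\,r+\mu-2-\mathcal{F}\sigma_\star)$, and impose the same sign conditions; your reading of the $\min$ as a $\max$ is exactly what the paper's choice of $m$ and $\gamma_1$ uses. Where the paper cites Theorem~\ref{Thm:FwdsFuchs}, you run a contraction on the integral equations from $t=0$, which usefully exposes the Lipschitz assumption on $\mathcal{V},\mathcal{F}$ that the paper tacitly needs. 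Two corrections. First, negative coefficients in front of $\frac{1}{t}$ are the right sign for the problem with $U\to0$ at $t=0$, not the wrong one; they are what makes that solution unique. Second, since every remainder term vanishes at $U=0$, the unique solution is $U\equiv0$, i.e.\ $v\equiv F\equiv0$ and exact Kasner, so your closing claim that the constraints \Eqref{Eq:v_F_def} propagate cannot hold for it when $V$ is nonvanishing (as $|V'/V|\le\mathcal{C}_V$ requires).
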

	\begin{proof}
		To prove this statement by applying the forwards Fuchsian Theorem (see Theorem~\ref{Thm:FwdsFuchs} in Appendix~\ref{Append:Fuchsian}). For this, we first plug \Eqsref{Eq:Sols} into \Eqsref{Eq:BianchSystem} in order to derive evolution equations for  $\tilde{v}(t),\tilde{F}(t),\tilde{x}(t),\tilde{\sigma}(t),\tilde{\omega}(t),\tilde{\varphi}(t),\tilde{P}(t)$. We then write the equations in Fuchsian form (see \Defref{Def:FuchsianODE} in Appendix~\ref{Append:Fuchsian}). To that end, we pick some $m$ such that
		\begin{subequations}
			\label{Eq:Bianchi_1_Existance_Fwds}
			\begin{align}
				0<m < \min\{ a-c, b - e, \epsilon+s-a,\epsilon+s-c,\epsilon+s-d \}.
			\end{align}
			The fact that such an $m$ exists is a direct consequence of \Eqref{Eq:Existance_ineqaulities}. The resulting evolution equations (for $\tilde{v}(t),\tilde{F}(t),\tilde{x}(t),\tilde{\sigma}(t),\tilde{\omega}(t),\tilde{\varphi}(t),\tilde{P}(t)$) can be written as 
			\begin{align}
				\partial_{t}u(t)= -\frac{1}{t}B(t,u)u(t) + t^{-1+m}H(t),
			\end{align}
			where $u(t)=(\tilde{\omega}(t),\tilde{x}(t),\tilde{\sigma}(t),\tilde{\varphi}(t),\tilde{P}(t),\tilde{v}(t),\tilde{F}(t))^T$ and
			\begin{align}
				B(t,u)= \text{diag}(d,a,b,e,c,s+\epsilon-2-\tilde{\mathcal{V}}(\tilde{\varphi})\sigma_{\star},r+\mu-2-\tilde{\mathcal{F}}(\tilde{\varphi})\sigma_{\star}),
				\\
				H(t,u)=(H_{1}(t,u),H_{2}(t,u),H_{3}(t,u),H_{4}(t,u),H_{5}(t,u),H_{6}(t,u),H_{7}(t,u))^T,
			\end{align}
			with 
			\begin{align}
				H_{1}(t,u) = t^{\epsilon+s-d-m}\alpha(t,t^s\tilde{v})(\omega_\star + t^{d}\tilde{\omega}(t))\tilde{v}(t),
				\\
				H_{2}(t,u) =  t^{s+\epsilon-a-m}\alpha(t,t^s\tilde{v})(x_\star+t^a\tilde{x}(t))\tilde{v}(t),
				\\
				H_{3}(t,u) =  - t^{\epsilon-m}\alpha(t,t^s\tilde{v})( t^s\tilde{v}(t)(\sigma_{\star}+t^b\tilde{\sigma}(t)) + t^{r}\tilde{F}(t) ),
				\\
				H_{4}(t,u)   =t^{b-e-m}\tilde{\sigma}(t)+3t^{s+\epsilon-e-m}\tilde{v}(t)\alpha(t,t^{s}\tilde{v}) ( \sigma_\star + \tilde{\sigma}(t) ),
				\\
				H_{5}(t,u) = t^{a-c-m}\tilde{x}(t)+ t^{\epsilon+s-c-m}\alpha(t,t^s\tilde{v}(t))(x_\star+t^a\tilde{x}(t))\tilde{v}(t),
				\\
				H_{6}(t,u) = t^{b-m}\tilde{\mathcal{V}}(\tilde{\varphi})\tilde{\sigma}(t)\tilde{v}(t) +3t^{\epsilon+s-m}\tilde{\mathcal{V}}(\tilde{\varphi})\alpha(t,t^s\tilde{v})( \sigma_\star + t^{b}\tilde{\sigma}(t) )\tilde{v}(t)^2,
				\\
				H_{7}(t,u) = t^{b-m}\tilde{\mathcal{F}}(\tilde{\varphi})\tilde{\sigma}(t)\tilde{F}(t) + 3t^{\epsilon+s-m}\tilde{\mathcal{F}}(\tilde\varphi)\alpha(t,t^s\tilde{v})( \sigma_\star + t^{b}\tilde{\sigma}(t) )\tilde{v}(t)\tilde{F}(t),
			\end{align}
			and where, for the sake of readability, we have set 
			\begin{align}
				\tilde{\mathcal{V}}(\tilde\varphi)=\mathcal{V}(\varphi_\star + \sigma_\star\ln(t) + t^{e}\tilde{\varphi}(t)),
				\quad
				\tilde{\mathcal{F}}(\tilde\varphi)=\mathcal{F}(\varphi_\star + \sigma_\star\ln(t) + t^{e}\tilde{\varphi}(t)).
			\end{align}
		\end{subequations}
		We now show that \Eqsref{Eq:Bianchi_1_Existance_Fwds} are indeed of Fuchsian form. To this end we note that, for each $t\in[0,T]$ we have that $H(t,u)$ is a continuous map which depends smoothly on $u$ and has the property that $H(t,0)=0$. Note that $H(t,u)$ depending smoothly only $u$ follows from the fact that $v\in B_{R}(\mathbb{R})$. 
		
		Now, let $(\cdot,\cdot)$ be the standard Euclidean inner product defined over $\mathbb{R}^7$. Then, in order to apply the forwards Fuchsian Theorem, we must show that there is a constant $\gamma_{1}\ge 0$ such that $(u,Bu)\ge \gamma_{1}(u,u)$. For any $u\in\mathbb{R}^7$ we have
		\begin{align}
			\begin{split}
				(u,B(t,u)u)=&du_{1}^2+au_{2}^2 + bu_{3}^2 + eu_{4}^2 + cu_{5}^2 + (s+\epsilon-2-\tilde{\mathcal{V}}(\tilde{\varphi})\sigma_{\star})u_{6}^2 
				\\
				&+ (r+\mu-2-\tilde{\mathcal{F}}(\tilde{\varphi})\sigma_{\star})u_{7}^2
				\\
				\ge& \min\{ a,b,c,d,e \}( u_1^2 + u_2^2 + u_3^2 + u_4^2 + u_5^2 ) 
				\\
				&+ (s+\epsilon-2-|\tilde{\mathcal{V}}(\tilde{\varphi})||\sigma_{\star}|)u_{6}^2 + (r+\mu-2-|\tilde{\mathcal{F}}(\tilde{\varphi})||\sigma_{\star}|)u_{7}^2
				\\
				\ge& \underbrace{\min\{ a,b,c,d,e,s+\epsilon-2-\mathcal{C}_{V}|\sigma_{\star}|,r+\mu-2-\mathcal{C}_{F}|\sigma_{\star}| \}}_{:=\gamma_{1}}(u,u).
			\end{split} 
		\end{align}
		The fact that this particular $\gamma_{1}$ is non-negative is a consequence of the assumptions \Eqref{Eq:Existance_ineqaulities}. We therefore conclude that there is a ${T}>0$ such that the initial value problem of \Eqsref{Eq:Bianchi_1_Existance_Fwds} has a unique global continuously differentiable solution $u:I\rightarrow\mathbb{R}^7$ with the property that $u\rightarrow0$ as $t\rightarrow0$. This proves the statement.
	\end{proof}
	Observe carefully that the solutions given by \Eqref{Eq:Sols} are asymptotically Kasner in the sense of \Defref{Def:Kasner}. Moreover, we note that the inequalities in \Eqref{Eq:Existance_ineqaulities} can be understood as constraints on the ``constitutive response'' of the scalar medium. Specifically, they require that the forcing term $f(\phi)$ and the potential $V(\phi)$ do not grow faster than the geometric expansion rate near the singularity. This ensures that while our model deviates from the standard variational approach, it remains within the same class of Kasner-like cosmologies near the Big Bang.

	\subsubsection{Near-minimal scalar field: Stability of Kasner-type spacetimes}
	While Theorem~\ref{Thm:Existance} tells us that asymptotically Kasner solutions exist it does not give us any information about the \emph{stability} of these solutions. By this we mean the following: Let $\{\mathring{P},\mathring{\sigma},\mathring{x},\mathring{\omega},\mathring{\varphi},\mathring{v},\mathring{F}\}$ be known asymptotically Kasner solution of \Eqsref{Eq:BianchSystem} on the interval $\left(0,T\right]$. Then, for each $(u,\mathring{u})\in\{ (P,\mathring{P}),(\sigma,\mathring{\sigma}),(x,\mathring{x}),(\omega,\mathring{\omega}),(\varphi,\mathring{\varphi}),(v,\mathring{v}),(F,\mathring{F})\}$ we solve \Eqsref{Eq:BianchSystem} on the interval $\left(0,T\right]$ with initial data 
	\begin{subequations}
		\begin{align}
			u(T)=\mathring{u}(T) + \hat{u}(T),
		\end{align}
		where $\hat{u}\in\{\hat{P},\hat{\sigma},\hat{x},\hat{\omega},\hat{\varphi},\hat{v},\hat{F}\}$ is the corresponding ``perturbed variable''. Stability now boils down to two questions: (1) If $\hat{u}(T)$ is small initially, does the full solution $\hat{u}(t)$ somehow remain ``close'' to the background solution $\mathring{u}(t)$? (2) Are these perturbed solutions still asymptotically Kasner in the sense of \Defref{Def:Kasner}?
		
		The purpose of this subsection here is to demonstrate that, under appropriate conditions, asymptotically Kasner solutions (as given by Theorem~\ref{Thm:Existance}) are indeed stable to small perturbations. Here, we focus exclusively on the Bianchi I setting.	It would of course be interesting to investigate stability in the context of spatially \emph{inhomogeneous} perturbations. However, proving such a thing is complicated and goes well beyond the scope of our focus here. 
		
		Now, for each $(u,\mathring{u})\in\{ (P,\mathring{P}),(\sigma,\mathring{\sigma}),(x,\mathring{x}),(\omega,\mathring{\omega}),(\varphi,\mathring{\varphi}),(v,\mathring{v}),(F,\mathring{F})\}$ we define the corresponding perturbed variable $\hat{u}\in\{\hat{P},\hat{\sigma},\hat{x},\hat{\omega},\hat{\varphi},\hat{v},\hat{F}\}$ as 
		\begin{align}
			\hat{u}(t) = u(t) - \mathring{u}(t).
		\end{align}
	\end{subequations}
	The resulting evolution equations for the fields 
	$\{\hat{P},\hat{\sigma},\hat{x},\hat{\omega},\hat{\varphi},\hat{v},\hat{F}\}$ are
	\begin{subequations}
		\label{Eq:PertrubedEqs}
		\begin{align}
			\partial_{t}\hat{\omega} = t^{-1+\epsilon}\alpha(t,\hat{v}+\mathring{v})\left( (\mathring{v}(t)+\hat{v}(t))\hat{\omega}(t) + \alpha(t,\mathring{v})\mathring{\omega}(t)\hat{v}(t) \right)
			\\
			\partial_{t}\hat{x}(t) = -3t^{-1+\epsilon}\alpha(t,\hat{v}+\mathring{v})( (\hat{v}(t) + \mathring{v}(t) )\hat{x}(t) + \alpha(t,\mathring{v})\mathring{x}(t)\hat{v}(t) ),
			\\
			\partial_{t}\hat{P}(t) = \frac{1}{t}\hat{P}(t) + \hat{x}(t) +3t^{\epsilon}\alpha(t,\mathring{v}+\hat{v})( (\mathring{v}(t)+\hat{v}(t))\hat{x}(t) + \alpha(t,\mathring{v})\mathring{x}(t)\hat{v}(t) ),
			\\
			\partial_{t}\hat{\varphi}(t) = \frac{1}{t}\hat{\varphi}(t) + \hat{\sigma}(t) +3t^{\epsilon}\alpha(t,\mathring{v}+\hat{v})( (\mathring{v}(t)+\hat{v}(t))\hat{\sigma}(t) + \alpha(t,\mathring{v})\mathring{\sigma}(t)\hat{v}(t) ),
			\\
			\partial_{t}\hat{\sigma}(t)= -3t^{-1+\epsilon}\alpha(t,\hat{v}+\mathring{v})( (\hat{v}(t)+\mathring{v}(t))\hat{\sigma}(t) + \alpha(t,\mathring{v})\hat{v}(t)\mathring\sigma(t))	- t^{-1+\mu}h_{\sigma},
			\\
			\partial_{t}\hat{v}(t)= \frac{2-\epsilon + \mathcal{V}(t^{-1}(\hat{\varphi}+\mathring{\varphi}))(\mathring{\sigma}(t)+\hat{\sigma}(t))}{t}\hat{v}(t) + \frac{h_{v}}{t}\mathring{v}(t) + t^{-1+\epsilon}\alpha(t,\mathring{v}+\hat{v})\hat{h}_{v},
			\\
			\partial_{t}\hat{F}(t) = \frac{2-\mu+\mathcal{F}(t^{-1}(\mathring{\varphi}+\hat{\varphi}))(\mathring{\sigma}(t)+\hat{\sigma}(t))}{t}\hat{F}(t) + \frac{h_{f}}{t}\mathring{F}(t) + t^{-1+\epsilon}\alpha(t,\mathring{v}+\hat{v})\hat{h}_{f},
		\end{align}
		where, for the sake of readability, we have defined
		\begin{align}
			h_{\sigma} = \alpha(t,\hat{v}+\mathring{v})(\hat{F}(t) + 3t^{\epsilon}\alpha(t,\mathring{v})\mathring{F}(t)\hat{v}(t))&,
			\\
			h_{v} = \mathcal{V}(t^{-1}(\hat{\varphi}+\mathring{\varphi}))\hat{\sigma}(t) + (\mathcal{V}(t^{-1}(\hat{\varphi}+\mathring{\varphi}))-\mathcal{V}(t^{-1}\mathring{\varphi}))\mathring{\sigma}(t)&,
			\\
			h_{f} = \mathcal{F}(t^{-1}(\mathring{\varphi}+\hat\varphi))\hat{\sigma}(t)+(\mathcal{F}(t^{-1}(\mathring{\varphi}+\hat\varphi))-\mathcal{F}(t^{-1}\mathring{\varphi})\mathring{\sigma}(t)&,
			\\
			\hat{h}_{v} = 3( (\hat{v}(t)+\mathring{v}(t))^2\hat{\sigma}(t) + (\hat{v}(t)+2\mathring{v}(t))\hat{v}(t)\mathring{\sigma}(t) + 3t^{\epsilon}\alpha(t,\mathring{v})\mathring{v}(t)^2\mathring{\sigma}(t)\hat{v}(t) )&, 
			\\
			\begin{split}
				\hat{h}_{f} = 3(\mathring{F}(t)\mathring{v}(t)\hat{\sigma}(t)+\mathring{F}(t)\hat{v}(t)(\mathring{\sigma}(t)+\hat{\sigma}(t))+(\mathring{v}(t)+\hat{v}(t))(\mathring{\sigma}(t)+\hat{\sigma}(t))\hat{F}(t))&
				\\
				+9t^{\epsilon}\alpha(t,\mathring{v})\mathring{\sigma}(t)\mathring{v}(t)^2\hat{v}(t)&.
			\end{split}
		\end{align}
	\end{subequations}		
	We now show that if the perturbed variables are sufficiently small at the initial time $t=T$ then they shall remain small for all $t\in[0,T]$. The question of whether or not the resulting solutions are asymptotically Kasner shall be addressed after.
	\begin{Thm}
		\label{Thm:Bounded}
		Consider \Eqsref{Eq:PertrubedEqs} and suppose that the fields $\{\mathring{P},\mathring{\sigma},\mathring{x},\mathring{\omega},\mathring{\varphi},\mathring{v},\mathring{F}\}$ are given as in \Eqsref{Eq:Sol_Expansion}. If the constants $\epsilon,\mu,\mathcal{C}_V,\mathcal{C}_F$ and $\sigma_{\star}$ satisfy the inequalities
		\begin{subequations}
			\begin{align}
				2-\epsilon- \mathcal{C}_{V}|\sigma_\star|>0,
				\quad
				2-\mu- \mathcal{C}_{F}|\sigma_\star|>0,
				\label{Eqs:StabilityInequalities}
			\end{align}
			then there are $R,\delta>0$ such that the initial value problem of \Eqsref{Eq:PertrubedEqs}, on the interval $[0,T]$, has a unique global continuously differentiable solution with the property 
			\begin{align}
				|\hat{P}(t)|^2 + |\hat{\sigma}(t)|^2 +|\hat{x}(t)|^2 + |\hat{\omega}(t)|^2 + |{\varphi}(t)|^2+ |\hat{v}(t)|^2 + |\hat{F}(t)|^2 < R,
			\end{align}
			for all $t\in [0,T]$, provided
			\begin{align}
				|\hat{P}(T)|^2 + |\hat{\sigma}(T)|^2 +|\hat{x}(T)|^2 + |\hat{\omega}(T)|^2 + |{\varphi}(T)|^2+ |\hat{v}(T)|^2 + |\hat{F}(T)|^2 < \delta.
			\end{align}
		\end{subequations}
	\end{Thm}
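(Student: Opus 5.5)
The plan is to treat \Eqsref{Eq:PertrubedEqs} as a singular initial value problem posed at $t=T$ and solved towards $t=0$, and to invoke the backwards (global existence / stability) version of the Fuchsian theorem from Appendix~\ref{Append:Fuchsian}. First, collect the perturbed variables into a single vector
\begin{align*}
	\hat{u}=(\hat{\omega},\hat{x},\hat{\sigma},\hat{\varphi},\hat{P},\hat{v},\hat{F})^T ,
\end{align*}
and write the system in the form
\begin{align*}
	\partial_t\hat{u} = \frac{1}{t}\hat{\mathcal{B}}(t,\hat{u})\hat{u} + t^{-1+\eta}\hat{H}(t,\hat{u}),
\end{align*}
for some $0<\eta\le\min\{\epsilon,\mu\}$. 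The singular coefficient matrix is
\begin{align*}
	\hat{\mathcal{B}}=\mathrm{diag}\bigl(0,0,0,1,1,\,2-\epsilon+\mathcal{V}(\cdot)(\mathring{\sigma}+\hat{\sigma}),\,2-\mu+\mathcal{F}(\cdot)(\mathring{\sigma}+\hat{\sigma})\bigr).
\end{align*}
All remaining terms are placed in $\hat{H}$. These are the terms carrying explicit powers $t^{\epsilon}$ or $t^{\mu}$, the couplings $\hat{x}$ and $\hat{\sigma}$ in the $\hat{P}$ and $\hat{\varphi}$ equations, and the source terms $h_v\mathring{v}/t$ and $h_f\mathring{F}/t$.

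The source terms need care, since they appear with a bare $1/t$. By Theorem~\ref{Thm:Existance} we have $\mathring{v}=t^{s}\tilde{v}$ and $\mathring{F}=t^{r}\tilde{F}$ with $s,r>0$. Hence $h_v\mathring{v}/t = t^{-1+s}(\cdots)$, which is of the admissible form. The same holds for the differences $\mathcal{V}(t^{-1}(\hat{\varphi}+\mathring{\varphi}))-\mathcal{V}(t^{-1}\mathring{\varphi})$, which are bounded by $2\mathcal{C}_V$ using the hypothesis on $V'/V$. One then checks that $\hat{H}(t,\hat{u})$ is continuous in $t$, smooth and bounded in $\hat{u}$ on a ball $B_R$, and vanishes at $\hat{u}=0$. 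Smallness of $\mathring{v}+\hat{v}$ keeps $\alpha$ finite, as in the existence proof.

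The key step is the sign condition on the matrix. Because the system is integrated backward in $t$, the backwards Fuchsian theorem requires $(\hat{u},\hat{\mathcal{B}}\hat{u})\ge 0$, that is, the singular term must be non-expanding towards $t=0$. The diagonal entries $0,0,0,1,1$ are trivially fine. For the last two entries I bound
\begin{align*}
	|\mathcal{V}(\mathring{\sigma}+\hat{\sigma})|\le\mathcal{C}_V\bigl(|\sigma_\star|+|t^b\tilde{\sigma}|+|\hat{\sigma}|\bigr).
\end{align*}
By \Eqref{Eqs:StabilityInequalities} there is a margin $2-\epsilon-\mathcal{C}_V|\sigma_\star|>0$. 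Choosing $R$ and $T$ small enough that the correction terms stay below half that margin, the entry remains positive. The $\hat{F}$ entry is handled in the same way using $\mathcal{C}_F$. This produces a uniform $\gamma_1>0$ with
\begin{align*}
	(\hat{u},\hat{\mathcal{B}}\hat{u})\ge 0,\qquad \hat{\mathcal{B}}_{66},\hat{\mathcal{B}}_{77}\ge\gamma_1 .
\end{align*}

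The energy argument then closes. Compute $\frac{d}{dt}|\hat{u}|^2$; the $\hat{\mathcal{B}}$ term has a favourable sign when integrating from $T$ down to $t$. The $t^{-1+\eta}$ term is integrable near $0$ and gives, by Gr\"onwall,
\begin{align*}
	|\hat{u}(t)|^2\le C\,|\hat{u}(T)|^2 .
\end{align*}
Choosing $\delta$ with $C\delta<R$ gives the stated bound, and a continuation argument yields global existence and uniqueness on $(0,T]$, extending continuously to $t=0$.

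I expect the main obstacle to be the $\hat{P}$ and $\hat{\varphi}$ equations, and the zero diagonal blocks. The linear coupling $\hat{x}$ (resp. $\hat{\sigma}$) enters without a positive power of $t$, so it must be written as $t^{-1+1}\hat{x}$, which is admissible. However, the blocks with entry $0$ give only boundedness, not decay, for $\hat{\omega},\hat{x},\hat{\sigma}$. I would therefore state the conclusion only as boundedness, which is exactly what the theorem claims. If the $1/t$ terms $h_v\mathring{v}/t$ cannot be absorbed for $s$ small, I would instead rescale $\hat{v}\mapsto t^{-\lambda}\hat{v}$ with small $\lambda$ to create room.
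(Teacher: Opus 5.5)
Your route is the paper's. There too the perturbed system is put in backward Fuchsian form with a diagonal $\mathcal{B}$ whose entries are $0$ for $\hat\omega,\hat x,\hat\sigma$, $1$ for $\hat P,\hat\varphi$, and $2-\epsilon+\mathcal{V}(\cdot)(\cdot)$, $2-\mu+\mathcal{F}(\cdot)(\cdot)$ for $\hat v,\hat F$. Positivity of the last two for $|\hat\sigma|$ small is read off from \Eqref{Eqs:StabilityInequalities}, and Theorem~\ref{Thm:BcksFuchs} is then invoked.

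The one structural difference is in how the background $\mathring\sigma$ is handled. The paper puts $\sigma_\star$ rather than $\mathring\sigma$ into the $\hat v,\hat F$ entries. It moves the remainder $t^{-1}\mathcal{V}\,t^{b}\tilde\sigma\,\hat v$ (and its $\mathcal{F}$ analogue) into $\hat H$ as $t^{-1+m}\cdot t^{b-m}\mathcal{V}\tilde\sigma\hat v$ with $m<b$. This gives $R=\min\{(2-\epsilon-\mathcal{C}_V|\sigma_\star|)/\mathcal{C}_V,(2-\mu-\mathcal{C}_F|\sigma_\star|)/\mathcal{C}_F\}$ directly on the given interval, whereas your version has to shrink $T$.

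Shrinking $T$ is harmless, but you should say how you return to the stated interval. On $[T',T]$ the system is regular and $\hat u\equiv 0$ solves it, so continuous dependence on data carries small data at $t=T$ to small data at $t=T'$. Also, your exponent must satisfy $\eta\le\min\{1,\epsilon,\mu,s,r\}$, not merely $\eta\le\min\{\epsilon,\mu\}$. Your own treatment of the $\hat x,\hat\sigma$ couplings and of $h_v\mathring v/t$, $h_f\mathring F/t$ already requires this.

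One step claims more than your bounds support; the paper glosses over the same point by simply asserting that $\hat H$ depends smoothly on $\hat u$ for each $t$. The Gr\"onwall estimate $|\hat u(t)|^2\le C|\hat u(T)|^2$ needs $|\hat H(t,\hat u)|\le C|\hat u|$ uniformly in $t$. The differences $\mathcal{V}(t^{-1}(\hat\varphi+\mathring\varphi))-\mathcal{V}(t^{-1}\mathring\varphi)$ in $h_v$ (and the $\mathcal{F}$ analogue in $h_f$) are not $O(|\hat u|)$ uniformly. The reason is that $t^{-1}\hat\varphi=\hat\phi$ is not controlled by $|\hat\varphi|$; it generically grows like $\hat\sigma_\star\ln t$. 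So these differences are only bounded by $2\mathcal{C}_V$, as you yourself observe.

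Consequently the pieces $t^{-1}\mathring v\,(\mathcal{V}(\cdot)-\mathcal{V}(\cdot))\,\mathring\sigma$ act as an inhomogeneous term of size $O(t^{-1}|\mathring v|)$. The energy argument then gives $|\hat u(t)|\le C\bigl(|\hat u(T)|+\int_0^{T}\tau^{-1}(|\mathring v(\tau)|+|\mathring F(\tau)|)\,d\tau\bigr)$, not your linear estimate. Since $s,r>0$, the integral is finite and becomes small for small $T$. Combined with the continuous-dependence step above, this still yields the boundedness the theorem asserts, which is all that is claimed.
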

	\begin{proof}
		We prove this via straightforward application of the backwards Fuchsian Theorem (see Theorem~\ref{Thm:BcksFuchs} in Appendix~\ref{Append:Fuchsian}). Now, given the assumptions above we find that, for any $m$ such that
		\begin{align}
			0<m<\min\{1,\epsilon,\mu,b,r,s\},
		\end{align}
		\Eqsref{Eq:PertrubedEqs} can be written as 
		\begin{subequations}
			\label{Eq:uhat}
			\begin{align}
				\partial_{t}\hat{u}(t) = \frac{1}{t}\mathcal{B}(t,\hat{u})\hat{u} + t^{-1+m}\hat{H}(t,\hat{u}),
			\end{align}
			where $\hat{u}(t)=(\hat{\omega}(t),\hat{x}(t),\hat{v}(t),\hat{F}(t),\hat{\sigma}(t),\hat{P}(t),\hat{\varphi}(t))^T$ and
			\begin{align}
				\hat{H}(t,\hat{u}) =&\; (\hat{H}_{1}(t,\hat{u}),\hat{H}_{1}(t,\hat{u}),\hat{H}_{1}(t,\hat{u}),\hat{H}_{1}(t,\hat{u}),\hat{H}_{1}(t,\hat{u}),\hat{H}_{1}(t,\hat{u}),\hat{H}_{1}(t,\hat{u}))^T,
				\\
				\begin{split}
					\mathcal{B}(t,\hat{u}) =&\; \text{diag}\left( 0, 0, 2-\epsilon + \mathcal{V}(t^{-1}(\hat{\varphi}+\mathring{\varphi}))({\sigma}_\star + \hat{\sigma}(t)) \right., 
					\\
					&\quad\quad\quad \left. 2-\mu + \mathcal{F}(t^{-1}(\hat{\varphi}+\mathring{\varphi}))({\sigma}_\star +  \hat{\sigma}(t)), 0, 1, 1 \right),
				\end{split}
			\end{align}
			with
			\begin{align}
				\hat{H}_{1}(t,\hat{u}) = t^{\epsilon-m}\alpha(t,\hat{v}+\mathring{v})\left( (\mathring{v}(t)+\hat{v}(t))\hat{\omega}(t) + \alpha(t,\mathring{v})\mathring{\omega}(t)\hat{v}(t) \right),
				\\
				\hat{H}_{2}(t,\hat{u})= -3t^{\epsilon-m}\alpha(t,\hat{v}+\mathring{v})( (\hat{v}(t) + \mathring{v}(t) )\hat{x}(t) + \alpha(t,\mathring{v})\mathring{x}(t)\hat{v}(t) ),
				\\
				\hat{H}_{3}(t,\hat{u})= { \mathcal{V}(t^{-1}(\hat{\varphi}+\mathring{\varphi}))\tilde{\sigma}(t)}\hat{v}(t)t^{b-m} + {h_{v}}{t^{s-m}}\tilde{v}(t) + t^{\epsilon-m}\alpha(t,\mathring{v}+\hat{v})\hat{h}_{v},
				\\
				\hat{H}_{4}(t,\hat{u})={\mathcal{F}(t^{-1}(\mathring{\varphi}+\hat{\varphi}))\tilde{\sigma}(t)}\hat{F}(t)t^{b-m} + {h_{f}}\tilde{F}(t)t^{r-m} + t^{\epsilon-m}\alpha(t,\mathring{v}+\hat{v})\hat{h}_{f},
				\\
				\hat{H}_{5}(t,\hat{u})=-3t^{\epsilon-m}\alpha(t,\hat{v}+\mathring{v})( (\hat{v}(t)+\mathring{v}(t))\hat{\sigma}(t) + \alpha(t,\mathring{v})\hat{v}(t)\mathring\sigma(t))	- t^{\mu-m}h_{\sigma},
				\\
				\hat{H}_{6}(t,\hat{u})=\hat{x}(t)t^{1-m} +3t^{1+\epsilon-m}\alpha(t,\mathring{v}+\hat{v})( (\mathring{v}(t)+\hat{v}(t))\hat{x}(t) + \alpha(t,\mathring{v})\mathring{x}(t)\hat{v}(t) ),
				\\
				\hat{H}_{7}(t,\hat{u})= \hat{\sigma}(t)t^{1-m} +3t^{1+\epsilon-m}\alpha(t,\mathring{v}+\hat{v})( (\mathring{v}(t)+\hat{v}(t))\hat{\sigma}(t) + \alpha(t,\mathring{v})\mathring{\sigma}(t)\hat{v}(t) ),
			\end{align}
		\end{subequations}
		Observe carefully that $\hat{H}(t,\hat{u})$ is a continuous map, that for each $t\in[0,T]$, depends smoothly on $\hat{u}$. Thus, in order to apply Theorem~\ref{Thm:BcksFuchs} it only remains to show that there is some $R>0$ such that $(\hat{u},\mathcal{B}(t,\hat{u})\hat{u})\ge0$ for all $|\hat{u}|< R$. Given that $\mathcal{B}(t,\hat{u})$ is diagonal it is sufficient to show that each entry (of $\mathcal{B}(t,\hat{u})$) is non-negative. There are two non-trivial entries: $\mathcal{B}_{33}(t,\hat{u})$ and $\mathcal{B}_{44}(t,\hat{u})$. Starting with $\mathcal{B}_{33}(t,\hat{u})$ we get
		\begin{subequations}
			\begin{align}
				\begin{split}
					2-\epsilon + \mathcal{V}(t^{-1}(\hat{\varphi}+\mathring{\varphi}))({\sigma}_\star + \hat{\sigma}(t)) \ge 2-\epsilon - |\mathcal{V}(t^{-1}(\hat{\varphi}+\mathring{\varphi}))||{\sigma}_\star + \hat{\sigma}(t)|
					\\
					\ge 2-\epsilon - \mathcal{C}_{V}|{\sigma}_\star + \hat{\sigma}(t)| \ge 2-\epsilon - \mathcal{C}_{V}|{\sigma}_\star| - \mathcal{C}_{V}|\hat{\sigma}(t)|,
				\end{split}
			\end{align}
			which is positive provided 
			\begin{align}
				|\hat{\sigma}(t)|< \frac{2-\epsilon - \mathcal{C}_{V}|{\sigma}_\star|}{ \mathcal{C}_{V}}.
			\end{align}
			A similar calculation can be repeated for $\mathcal{B}_{44}(t,\hat{u})$. In doing so we find that 
			\begin{align}
				R= \min\left\{ \frac{2-\epsilon - \mathcal{C}_{V}|{\sigma}_\star|}{ \mathcal{C}_{V}}, \frac{2-\mu - \mathcal{C}_{F}|{\sigma}_\star|}{ \mathcal{C}_{F}} \right\}.
			\end{align}
			The fact that $R$ exists and is positive is a consequence of \Eqref{Eqs:StabilityInequalities}. We therefore have that $(\hat{u},\mathcal{B}(t,\hat{u})\hat{u})\ge0$ for all $|\hat{u}|< R$. It follows then that we can apply the backwards Fuchsian Theorem (Theorem~\ref{Thm:BcksFuchs}, Appendix~\ref{Append:Fuchsian}) to conclude that there is a $\delta>0$ such that the initial value problem of \Eqsref{Eq:PertrubedEqs} has a unique global continuously differentiable solution $\hat{u}:[0,T]\times B_{R}(\mathbb{R}^7)\rightarrow\mathbb{R}$ with the property 
			\begin{align}
				|\hat{u}(t)| < R,
			\end{align}
			\emph{provided} 
			\begin{align}
				|\hat{u}(T)|<\delta.
			\end{align}
			This proves the statement.
		\end{subequations}
	\end{proof}
	Observe carefully that, in order for the inequalities \Eqref{Eqs:StabilityInequalities} to hold, we must have $\epsilon,\mu<2$. This is consistent with results presented in \cite{Ritchie2022}. 
	
	Theorem~\ref{Thm:Bounded} tells us that if the perturbation is sufficiently small at $t=T$ then they remain small as $t\rightarrow0$. However, we have yet to show that the resulting solutions are asymptotically Kasner (in the sense of \Defref{Def:Kasner}). That is the purpose of the following Corollary.
	\begin{Cor}
		Consider \Eqsref{Eq:PertrubedEqs} and suppose that the assumptions of Theorem~\ref{Thm:Bounded} hold. Then, the corresponding solutions $\{p,\nu,\chi,\Omega,{\phi}\}$ are asymptotically Kasner.
	\end{Cor}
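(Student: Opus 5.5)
The plan is to use the backwards Fuchsian Theorem (Theorem~\ref{Thm:BcksFuchs}) once more, in its full form and not only for the boundedness it gives. The goal is to show that the perturbed variables $\hat{u}$ from Theorem~\ref{Thm:Bounded} have limits as $t\rightarrow0^+$, and that those limits are consistent with the expansion in \Defref{Def:Kasner}. We write $u=\mathring{u}+\hat{u}$. The background $\mathring{u}$ already has the asymptotically Kasner form \Eqsref{Eq:Sol_Expansion} by Theorem~\ref{Thm:Existance}. It therefore suffices to show that each perturbation only shifts the Kasner constants $(x_\star,\sigma_\star,\omega_\star,\varphi_\star,P_\star)$ and leaves a remainder that tends to zero.

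\textbf{Step 1: the variables with vanishing diagonal entry.} Read off the Fuchsian form \Eqsref{Eq:uhat}. For $\hat{\omega},\hat{x},\hat{\sigma}$ the entries of $\mathcal{B}$ vanish, so these variables satisfy $\partial_t\hat{u}_i=t^{-1+m}\hat{H}_i(t,\hat{u})$. Theorem~\ref{Thm:Bounded} gives $|\hat{u}|<R$ on $[0,T]$, so each $\hat{H}_i$ is bounded there. Hence $\partial_t\hat{u}_i$ is integrable near $0$, and the limits $\hat{\omega}_0,\hat{x}_0,\hat{\sigma}_0$ exist as $t\rightarrow0^+$, with $|\hat{u}_i(t)-\hat{u}_{i,0}|\lesssim t^m$. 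This is the standard ``limit exists'' conclusion of the backwards Fuchsian theorem.

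\textbf{Step 2: the variables with positive diagonal entry.} For $\hat{v}$ and $\hat{F}$ the entries $\mathcal{B}_{33},\mathcal{B}_{44}$ are bounded below by a positive constant on $|\hat{u}|<R$; this was shown in Theorem~\ref{Thm:Bounded}. We will use an energy estimate, $\partial_t|\hat{v}|^2\ge \frac{2\gamma}{t}|\hat{v}|^2-Ct^{-1+m}|\hat{v}|$, integrated backwards from $T$. This should give $\hat{v},\hat{F}\rightarrow0$, indeed like $t^{\min\{\gamma,m\}}$, and hence $v=\mathring{v}+\hat{v}\rightarrow0$. For $\hat{P}$ and $\hat{\varphi}$ the diagonal entry is $1$, with sources $\hat{x}$ and $\hat{\sigma}$. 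Applying variation of constants to $\partial_t(t^{-1}\hat{\varphi})=t^{-1}(\hat{\sigma}+O(t^{\epsilon}))$ gives
\[
\hat{\varphi}=\hat{\varphi}_\star t+\hat{\sigma}_0t\ln t+o(t\ln t),
\]
and the analogous expansion holds for $\hat{P}$.

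\textbf{Step 3: undo the rescalings.} Invert \Eqsref{Eq:decaying_vars}: $\nu=t^{-1}\sigma$, $\chi=t^{-1}x$, $\Omega=t^{1/3}\omega$, $p=t^{-1}P$, $\phi=t^{-1}\varphi$. The Kasner constants become $\nu_\star=\sigma_\star+\hat{\sigma}_0$, $\chi_\star=x_\star+\hat{x}_0$, $\Omega_\star=\omega_\star+\hat{\omega}_0$, $\phi_\star=\varphi_\star+\hat{\varphi}_\star$ and $p_\star=P_\star+\hat{P}_\star$. The remainders $\tilde{\nu},\tilde{\chi},\tilde{\Omega},\tilde{p},\tilde{\phi}$ tend to zero by Steps 1--2, which verifies \Defref{Def:Kasner}.

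\textbf{Main obstacle.} The hardest part is the log-coefficient consistency in $\hat{\varphi}$ and $\hat{P}$. For \Defref{Def:Kasner} the coefficient of $\ln t$ in $\phi$ must equal the limit $\nu_\star$, so the $t\ln t$ term of $\hat{\varphi}$ must be exactly $\hat{\sigma}_0$ and not some other constant. The same must hold for $\hat{P}$ versus $\hat{x}_0$. The difficulty is that $\hat{\sigma}-\hat{\sigma}_0=O(t^m)$ only at a rate $t^m$, and one must check that the integral $\int t^{-1}(\hat{\sigma}-\hat{\sigma}_0)\,dt$ converges. It does, since $m>0$, but $m$ may have to be shrunk below $\epsilon$ and $\mu$ so that the $t^{\epsilon}v\sigma$ terms are also integrable. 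A related subtlety is that $\mathcal{V}(t^{-1}\varphi)$ is evaluated at an argument $\sim\ln t$ that diverges. Only the uniform bounds $\mathcal{C}_V,\mathcal{C}_F$ are available, which is exactly why positivity in Step 2 relies on \Eqref{Eqs:StabilityInequalities} and not on any limit of $\mathcal{V}$.
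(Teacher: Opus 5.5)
Your proposal is correct and follows essentially the paper's route. You integrate the zero-diagonal equations for $\hat\omega,\hat x,\hat\sigma$ to get limits at rate $t^m$; the paper reaches the same limits through a Cauchy-sequence argument. You then integrate $\partial_t\phi=\alpha\nu$ and $\partial_t p=\alpha\chi$ (equivalently your $\partial_t(t^{-1}\hat\varphi)$ equation) to get the logarithmic expansions. This only needs boundedness of $v$, so that $\alpha=1+O(t^{\epsilon})$; the decay of $\hat v,\hat F$ in your Step 2 is unnecessary.

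One correction: the remainder in your displayed expansion of $\hat\varphi$ (and likewise $\hat P$) must be $o(t)$, not $o(t\ln t)$. After dividing by $t$, an $o(\ln t)$ term need not tend to zero, so the weaker bound would not give $\tilde\phi\to0$. The convergence of $\int_0^{t} s^{-1}(\hat\sigma(s)-\hat\sigma_0+O(s^{\epsilon}))\,ds$, which you note in your last paragraph, gives exactly the needed $o(t)$ bound.
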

	\begin{proof}
		We prove this in two steps. In a first we show that the perturbed variables $(\hat{\omega},\hat{x},\hat{\sigma})$ satisfy the appropriate bounds. Then, in a second step, we show that the \emph{physical unknowns} $(\phi,p)$ have the appropriate structure.   
		
		We now show that $\hat{\omega}$ decays appropriately. For this we use a standard Cauchy sequence argument: For any two times $t,\tilde{t}\in\left(0,T\right]$, we have
		\begin{subequations}
			\begin{align}
				\hat{\omega}(t) - \hat{\omega}(\tilde{t}) = \int_{\tilde{t}}^{t} s^{-1+m}\hat{H}_1(s,\hat{u}(s))ds, 
			\end{align}
			which follows from \Eqref{Eq:uhat}. Then,
			\begin{align}
				|\hat{\omega}(t) - \hat{\omega}(\tilde{t})|\le \int_{\tilde{t}}^{t} s^{-1+m}|\hat{H}_1(s,\hat{u}(s))|ds \le \frac{C_\omega}{m}(t^m-\tilde{t}^m),
				\label{Eq:omega_CauchyEstimate}
			\end{align}
			where we have defined 
			\begin{align}
				C_\omega = \sup_{{(t,\hat{u})\in[0,T]\times B_{R}(\mathbb{R}^7)}}|\hat{H}_{1}(t,\hat{u})|.
			\end{align}
		\end{subequations}
		Now, given any monotonic sequence of times $(t_n)$ in $\left(0,T\right]$ with the property that $(t_n)\rightarrow 0$ as $n\rightarrow\infty$ we have that the sequence $(t_n^m)$ also converges to zero. As a consequence, we conclude that $(t_n^m)$ is a \emph{Cauchy sequence}. i.e., for every $\delta>0$ there exists an integer $N>0$ such that 
		\begin{subequations}
			\begin{align}
				|t_{n_1}^m-t_{n_2}^m|<\delta,
				\label{Eq:2ndODE_tCauchy}
			\end{align}
			for all $n_1,n_2\ge N$. If $n_1\le n_2$ then \Eqref{Eq:omega_CauchyEstimate} gives 
			\begin{align}
				|\hat{\omega}(t_{n_1})-\hat{\omega}(t_{n_2})|\le \frac{C_\omega}{m}|t_{n_1}^m-t_{n_2}^m|<\frac{C_\omega}{m}\delta.
			\end{align}
			Thus $(\hat{\omega}(t_n))$ is a Cauchy sequence and hence converges to some limit which we label as
			\begin{align}
				\hat{\omega}_\star = \lim_{n\rightarrow\infty}\hat{\omega}(t_n). 
			\end{align}
			Given that this limit exists we can return to \Eqref{Eq:omega_CauchyEstimate} and take the limit $\tilde{t}\rightarrow0$ to find
			\begin{align}
				|\hat{\omega}(t) - \hat{\omega}_\star|\le \frac{C_\omega}{m}t^m.
			\end{align}
		\end{subequations}
		Before continuing we must first check that the limit $\hat{\omega}_\star$ is independent of the chosen sequence $(t_n)$. To this end, let $(\tau_n)$ be another monotonic sequence of times in $\left(0,T\right]$ with the property that $(\tau_n)\rightarrow 0$ as $n\rightarrow\infty$. Proceeding as before we conclude that $(\hat{\omega}(\tau_n))$ converges to some limit $\bar{\omega}_\star$. Considering now the sequence $W_{n}=\hat{\omega}(t_n)-\hat{\omega}(\tau_n)$ we find
		\begin{subequations}
			\begin{align}
				\begin{split}
					|W_{n_1}-W_{n_2}|&=|\hat{\omega}(t_{n_1})-\hat{\omega}(t_{n_2})-(\hat{\omega}(\tau_{n_1})-\hat{\omega}(\tau_{n_2}))|
					\\
					&\le |\hat{\omega}(t_{n_1})-\hat{\omega}(t_{n_2})|+|\hat{\omega}(\tau_{n_1})-\hat{\omega}(\tau_{n_2})|<\frac{2C_\omega}{m}\delta.
				\end{split}
			\end{align}
			and hence $W_n$ is also a Cauchy sequence. There therefore exists a limit $W_\star$ defined as 
			\begin{align}
				W_\star = \lim_{n\rightarrow\infty}W_{n}=\hat{\omega}_\star-\bar{\omega}_\star.
			\end{align}
			Returning again to \Eqref{Eq:omega_CauchyEstimate} we pick $t\in(t_n)$ and $\tilde{t}\in (\tau_n)$ so that \Eqref{Eq:omega_CauchyEstimate} can be written as 
			\begin{align}
				|W_n|=|\hat{\omega}(t_n)-\hat{\omega}(\tau_n)|\le \frac{C_\omega}{m}|t_n^m-\tau_n^m|.
			\end{align}
			Taking the limit $n\rightarrow\infty$ and using that $t_n,\tau_n\rightarrow0$ we get 
			\begin{align}
				|W_\star|=0\implies \hat{\omega}_\star=\bar{\omega}_\star,
			\end{align}
		\end{subequations}
		and hence the limit is independent of the chosen sequence. This argument is standard and can be repeated for the variables $\hat{x},\hat{\sigma}$ to give
		\begin{subequations}
			\begin{align}
				|\hat{x}(t) - \hat{x}_\star|\le \frac{C_x}{m}t^m,
				\quad
				|\hat{\sigma}(t) - \hat{\sigma}_\star|\le \frac{C_\sigma}{m}t^m,
			\end{align}
			where 
			\begin{align}
				C_x = \sup_{{(t,\hat{u})\in[0,T]\times B_{R}(\mathbb{R}^7)}}|\hat{H}_2(t,\hat{u})|,
				\quad
				C_\sigma = \sup_{{(t,\hat{u})\in[0,T]\times B_{R}(\mathbb{R}^7)}}|\hat{H}_5(t,\hat{u})|
			\end{align}
		\end{subequations}
		It follows that for any constants $\hat{a},\hat{b},\hat{d}>0$ with the property that
		\begin{subequations}
			\begin{align}
				\hat{a}<\min\{a,m\},
				\quad
				\hat{b}<\min\{a,b\},
				\quad
				\hat{d}<\min\{d,m\},
			\end{align}
			there are functions $\check{\omega},\check{x},\check{\sigma}:[0,T]\rightarrow\mathbb{R}$ such that 
			\begin{align}
				\omega(t) = \omega_\star + \hat{\omega}_\star + t^{\hat{d}}\check{\omega}(t),
				\quad
				x = x_\star + \hat{x}_\star + t^{\hat{a}}\check{x}(t),
				\quad
				\sigma = \sigma_\star + \hat{\sigma}_\star + t^{\hat{b}}\check{\sigma}(t),
			\end{align}
		\end{subequations}
		with the property that $\check{\omega}(t),\check{x}(t),\check{\sigma}(t)\rightarrow 0$ as $t\rightarrow0$. The corresponding physical variables $(\Omega,\chi,\nu)$ (calculated via \Eqsref{Eq:decaying_vars}) therefore satisfy the conditions of \Defref{Def:Kasner}. 
		
		It now only remains to show that $\phi$ and $p$ are also asymptotically Kasner. In this case it is more convenient to work directly with the physical variables instead of the decaying ones. Starting with $\phi$ we note that, for any two times $t,\tilde{t}\in\left(0,T\right]$ we have
		\begin{subequations}
			\begin{align}
				\begin{split}
					\phi(t)-\phi(\tilde{t}) = \int_{\tilde{t}}^{t}\alpha(s,\mathring{v}+\hat{v})\nu(s)ds 
					=
					(\sigma_{\star}+\hat{\sigma}_\star)(\ln(t)-\ln(\tilde{t}))
					\\
					+ \int_{\tilde{t}}^{t}\left( s^{\epsilon-1}\alpha(s,\mathring{v}+\hat{v})(\mathring{v}(s)+\hat{v}(s))\sigma(s) + s^{-1+\hat{b}}\check{\sigma}(s) \right)ds.
				\end{split}
			\end{align}
			Rearranging and taking the absolute value gives 
			\begin{align}
				| \phi(t)-(\sigma_\star+\hat{\sigma}_{\star})\ln(t) - (\phi(\tilde{t})-(\sigma_\star+\hat{\sigma}_{\star})\ln(\tilde{t})) | \le \frac{C_{\phi}}{\hat{\epsilon}}( t^{\hat{\epsilon}}-\tilde{t}^{\hat{\epsilon}} ),
			\end{align}
			where $\hat{\epsilon}=\min\{ \epsilon, \hat{b} \}$ and
			\begin{align}
				C_\phi = \sup_{{(t,\hat{u})\in[0,T]\times B_{R}(\mathbb{R}^7)}}|t^{\epsilon-\hat{\epsilon}}\alpha(t,\mathring{v}+\hat{v})(\mathring{v}(t)+\hat{v}(t))\sigma(t) + t^{\hat{b}-\hat{\epsilon}}\check{\sigma}(t)|.
			\end{align}
			From here we can proceed in exactly the same way as above. i.e., we pick some Cauchy sequence $(t_n)$ in $\left(0,T\right]$ with the property that $(t_n)\rightarrow0$ as $n\rightarrow\infty$. Then, applying \Eqref{Eq:2ndODE_tCauchy}, we find that 
			\begin{align}
				| \phi(t_{n_1})-(\sigma_\star+\hat{\sigma}_{\star})\ln(t_{n_1}) - (\phi({t_{n_2}})-(\sigma_\star+\hat{\sigma}_{\star})\ln(t_{n_2})) |<\frac{C_\phi}{\hat{\epsilon}}\delta,
			\end{align}
			and hence $\phi(t_{n})-(\sigma_\star+\hat{\sigma}_{\star})\ln(t_{n})$ is a Cauchy sequence and has a limit which we label as 
			\begin{align}
				\phi_{\star}=\lim_{n\rightarrow\infty}|\phi(t_{n})-(\sigma_\star+\hat{\sigma}_{\star})\ln(t_{n})|.
			\end{align}
			One can apply a similar argument as before to show that this limit is independent of the chosen sequence $(t_n)$.
		\end{subequations}
		Applying the same procedure to $p(t)$ gives 
		\begin{subequations}
			\begin{align}
				|p(t) - p_{\star} - (x_\star+\hat{x}_\star)\ln(t) | \le \frac{C_p}{\check{\epsilon}}t^{\check{\epsilon}},
			\end{align}
			where $\check{\epsilon}=\min\{ \epsilon, \hat{a} \}$ and 
			\begin{align}
				C_\phi = \sup_{{(t,\hat{u})\in[0,T]\times B_{R}(\mathbb{R}^7)}}|t^{\epsilon-\check{\epsilon}}\alpha(t,\mathring{v}+\hat{v})(\mathring{v}(t)+\hat{v}(t))x(t) + t^{\hat{a}-\check{\epsilon}}\check{x}(t)|.
			\end{align}
		\end{subequations}
		We have then that for any constants $\hat{e},\hat{c}>0$ with the property that 
		\begin{subequations}
			\begin{align}
				\hat{e}<\min\{ e, \hat{\epsilon} \},
				\quad
				\hat{c}<\min\{ c, \check{\epsilon} \},
			\end{align}
			there are functions $\check{\omega},\check{p}:[0,T]\rightarrow \mathbb{R}$ such that 
			\begin{align}
				\phi(t) = \phi_{\star} + (\sigma_\star+\hat{\sigma}_\star)\ln(t) + t^{\check{e}}\check{\phi}(t),
				\quad
				p(t) = p_{\star} + (x_\star+\hat{x}_\star)\ln(t) + t^{\check{c}}\check{p}(t),
			\end{align}
			with the property that $\check{\phi}(t),\check{p}(t)\rightarrow 0$ as $t\rightarrow0$. This now proves the statement. 
		\end{subequations}
	\end{proof}
	
	\section{Conclusion}
	The primary goal of this work was to introduce a generic method for coupling matter fields to gravity without the use of a variational principle or Lagrangian. Our particular focus here was on scalar field models. To do this we ``split'' the energy momentum tensor into two pieces: the potential and kinetic parts. The potential part was considered as freely specifiable, and describes how the matter content interacts with gravity. The kinetic part is then determined as a solution of the Bianchi identity. 
	
	Our work here has focused exclusively scalar field couplings. However, the framework presented here also allows for \emph{non scalar field couplings}. The only restriction is that the ``potential part'' should have a non-vanishing divergence. 
	
	For our particular scalar field coupling, we chose the potential part to be determined by a single function. This coupling was chosen as it is consistent with standard scalar field models such as the minimal and $k$-essence scalar fields. Indeed, we were able to demonstrate that such models were contained within our framework. 
	
	In regards to the kinetic part (of the energy momentum tensor), we found that the Bianchi identities alone are underdetermined and therefore cannot be used to fully determine the kinetic part. This suggests that some piece of the kinetic part must be chosen before the equations can be solved. Within our framework, we identified this freedom as a constitutive choice, analogous to the material response functions in continuum mechanics. To explore this freedom we made use of the $(3+1)$-decomposition. Within this framework, we established under what conditions the Bianchi identities permitted a well-posed Cauchy problem. In addition we identified a field that could ``naturally'' be considered as freely specifiable. This choice (of free data) is not the only possible one and it should be emphasised that other choices are possible and may somehow be more ``preferable''.  
	
	We then made use of our formalism to investigate Bianchi I solutions of (what we have dubbed) \emph{near-minimal} scalar fields. For this model the forcing term (in the equation of motion) \emph{does not} coincide with the derivative of the potential function. Nonetheless, all other freedoms \emph{do} match the standard minimally coupled scalar field (with a non-zero potential) and as such we regard this model as being ``similar'' to a minimally coupled scalar field.
	
	In this setting, we investigated questions related to stable Big Bang formation. In particular, we established conditions under which the resulting solutions (of this near-minimal scalar field model) could be asymptotically matched to a Kasner scalar field solution near the initial singularity. In addition, we found that these solutions were stable to sufficiently small perturbations, provided they satisfied some appropriate conditions.
	
	Our initial explorations here are intriguing and suggest that there is a much larger class of permissible scalar field couplings that are consistent with General Relativity. While the immediate physical applicability of these non-variational models remains a subject for future investigation, our results demonstrate that the mathematical consistency of General Relativity does not strictly require a Lagrangian origin for the matter sector. We now briefly discuss three possible avenues for future research.
	
	The first clear direction would be to consider our framework within the context of \emph{black holes}. For standard scalar field models, the behaviour of the scalar field around a black hole is tightly constrained by traditional ``no-hair'' theorems (see, for example, \cite{herdeiro2015:scalarhair}). Our framework does not have a Lagrangian and as such it may possible to specify a highly non-linear equation of motion, for the scalar field, that could evade these theorems. In the context of the early universe, modifying the scalar field's kinetic and potential evolution without the rigid constraints of an action could alter the primordial power spectrum on small scales. This has direct relevance for the formation of primordial black holes, as well as the generation of early scalar field perturbations that could serve as seeds for supermassive black holes. A detailed analysis of black hole solutions and their stability within this non-variational framework is therefore an important direction for future research.	
	
	Second, it would be of particular interest to investigate the role of free data with regards to conservation laws. In standard variational formulations of GR, the covariant conservation of the energy-momentum tensor is fundamentally tied to the symmetries of the action via the Noether theorems. Due to the fact that our formalism does not have a Lagrangian (and does not arise from an action), traditional Noether conservation laws do not directly apply. Instead, conservation is enforced geometrically via the Bianchi identity. To frame this in terms of underlying symmetries without an action, it may be necessary reformulate the conservation laws explicitly in terms of spacetime Killing vectors, where the contraction $J^\mu = T^{\mu\nu}\xi_\nu$ with a Killing field $\xi_\nu$ yields a conserved current (i.e., $\nabla_{\mu}J^\mu = 0$).
	
	For our final avenue, of possible future work, it would be interesting to investigate different equations of motion. In the work presented here we have focused exclusively on a wave-equation as it is consistent with standard scalar field models. However, more general equations of motion are also possible. One could, for example, consider a first order transport equation. Indeed, our framework itself does not impose any particular restriction of the specific choice of the scalar fields equation of motion. Investigating other physically relevant equations would therefore be a fascinating line of inquiry that could allow for scalar field dynamics that is otherwise unobtainable with the standard Lagrangian couplings.

	\newpage
	\bibliographystyle{unsrt}
	\bibliography{bibfile}
	
	\newpage
	\begin{appendices}
		\section{Fuchsian theory for ODEs}
		\label{Append:Fuchsian}
		The purpose of this appendix to discuss Fuchsian techniques for ODEs. To help motivate the discussion here we first consider an example. Suppose that $u(t)$ is some unknown whose behaviour is governed by the ODE
		\begin{subequations}
			\begin{align}
				\partial_{t}u(t)=\frac{a}{t}u(t),
				\quad
				u(T_0) = u_0.
			\end{align}
			The goal is to solve this equation on the interval $[0,T_0]$. However, due to the presence of a $1/t$-term the evolution equation itself is formally singular. Nevertheless, one can integrate to find
			\begin{align}
				u(t)= t^{a}T_0^{-a}u_0.
			\end{align} 
		\end{subequations}
		Here we see that the function $u(t)$ is well-defined in the limit $t\rightarrow0$ provided $a>0$. In particular, $u(t)$ is well-defined and finite near $t=0$ even though the ODE itself is \emph{not}.
		
		This is the prototypical example of \emph{Fuchsian} equation, and it turns out that there is a much larger class of ODEs which are formally singular at $t=0$ whose solutions are nonetheless well-defined for all $t\in[0,T_0]$. This class is defined as follows:
		\begin{defn}
			\label{Def:FuchsianODE}
			Given constants $T_{0}>0,p>0,$ and $R>0$. A system of ordinary differential equations 
			\begin{align}
				\partial_{t}u(t)=\frac{1}{t}\mathcal{B}(t,u(t))u(t) + \tilde{F}(t) + t^{p-1}F_{0}(t,u(t)),
				\label{Eq:FuchsODE}
			\end{align}
			for an $\mathbb{R}^N$-valued unknown $u(t)$ is called a ``Fuchsian ODE system'' defined on the time interval $I=\left(0,T_0 \right]$, provided $\mathcal{B}:I\times B_{R}(\mathbb{R}^N) \rightarrow\mathbb{R}^{N\times N}$ is a continuous map $\mathcal{B}(t,v)$ that, for each $t\in I$, depends smoothly on the variable $v$, and, $\tilde{F}:I\rightarrow\mathbb{R}^N$ is continuous with 
			\begin{align}
				\int_{0}^{T_0}|\tilde{F}(s)|ds<\infty,
			\end{align}
			and, finally, $F_0:\bar{I}\times B_{R}(\mathbb{R}^N)\rightarrow\mathbb{R}^N$ is a continuous map $F_{0}(t,v)$ which, for each $t\in \bar{I}=[0,T_0]$, depends smoothly on $v$ and has the property that 
			\begin{align}
				F_{0}(t,0)=0,
				\quad 
				t\in  \bar{I}.
				\label{Eq:F=0}
			\end{align} 
		\end{defn}
		We utilize this definition in the following way: Suppose one aims to solve an equation of the form $F[\phi,t]=0$. Then the goal is to find functions $\phi_{\star}(t)$ and $f(t)$ such that $F[\phi_\star(t)+f(t)u(t),t]=0$ is a Fuchsian ODE for the new unknown $u(t)$. Given such a transformation one then applies the following theorem to establish existence. 
		\begin{Thm}[Forwards Fuchsian Theorem]
			\label{Thm:FwdsFuchs}
			Consider constants $T_{0}>0,p>0,$ and $R>0$, and a Fuchsian ODE system 
			\begin{align}
				\partial_{t}u(t)=\frac{1}{t}\mathcal{B}(t,u(t))u(t) + \tilde{F}(t) + t^{p-1}F_{0}(t,u(t)), 
				\label{Eq:FuchsianIVP_Fwds}
			\end{align}
			in the sense of \Defref{Def:FuchsianODE}. Suppose in addition that there exists a $\gamma_{1}\ge 0$ such that
			\begin{align}
				(v,\mathcal{B}(t,v) v) \le -\gamma_{1}|v|^2,
				\label{Eq:FuchsianODE_Fwd_y1}
			\end{align} 
			for all $v\in B_{R}(\mathbb{R}^N)$ and $t\in I$.
			
			Then there exists a $\delta>0$ such that the initial value problem \Eqref{Eq:FuchsianIVP_Fwds} has a unique global continuously differentiable solution $u:I\rightarrow \mathbb{R}^N$ such that 
			\begin{align}
				u(t)\rightarrow 0 \quad \text{as} \quad t\rightarrow 0
			\end{align}
			provided 
			\begin{align}
				\int_{0}^{T_0}|\tilde{F}(s)|ds\le\delta,
			\end{align} 
			and there is a constant $\lambda>0$ such that
			\begin{align}
				\int_{0}^{T_0}s^{-\lambda}|\tilde{F}(s)|ds<\infty.
			\end{align}
			In particular, we have $|u(t)|<R$ for all $t\in I$, and, there is a $C>0$ such that $u(t)$ satisfies the energy estimate 
			\begin{align}
				|u(t)| + \gamma_{1}\int_{0}^{T_0}\frac{1}{s}|u(s)|ds\le \left( |u_0| + \int_{t}^{T_0}|\tilde{F}(s)|ds  \right)\text{e}^{C (T_0^p - t^p)/p},
			\end{align}
			for all $t\in I$. Moreover, given any $\mu\in \mathbb{R}$ with $\mu<p$ and $\mu\le\lambda$, the solution $u$ satisfies the decay estimate
			\begin{align}
				|t^{-\mu}u(t)|\le \text{e}^{C T^{p-\mu}_0/p} \int_{0}^{T_0}s^{-\lambda}|\tilde{F}(s)|ds
			\end{align}
			for all $t\in I$.
		\end{Thm}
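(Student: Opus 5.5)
The statement to prove is the Forwards Fuchsian Theorem, Theorem~\ref{Thm:FwdsFuchs}. I would prove it by reducing it to a fixed-point problem for an integral equation on $I=(0,T_0]$, using the sign condition \eqref{Eq:FuchsianODE_Fwd_y1} to control the singular term. The basic energy identity comes from taking the inner product of \eqref{Eq:FuchsianIVP_Fwds} with $u$:
\begin{align*}
\tfrac12\partial_t|u|^2=\tfrac1t(u,\mathcal{B}u)+(u,\tilde F)+t^{p-1}(u,F_0(t,u)).
\end{align*}
Because $F_0$ is smooth in $u$ with $F_0(t,0)=0$ and $\bar I\times \overline{B_{R}}$ is compact, there is a constant $C$ with $|F_0(t,v)|\le C|v|$ on $B_R$.

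Since the equation is singular at $t=0$, I would first regularise it. For small $\tau>0$, solve on $[\tau,T_0]$ starting from $u(\tau)=0$ and evolving forwards in $t$. On this interval the equation is a regular ODE, so Picard--Lindel\"of gives a local solution. Write $|u|\,\partial_t|u|=\tfrac12\partial_t|u|^2$; where $u\neq0$, the sign condition \eqref{Eq:FuchsianODE_Fwd_y1} turns the identity into
\begin{align*}
\partial_t|u|+\frac{\gamma_1}{t}|u|\le |\tilde F|+Ct^{p-1}|u|.
\end{align*}
Gr\"onwall's inequality then yields
\begin{align*}
|u_\tau(t)|+\gamma_1\int_\tau^t \frac{|u_\tau|}{s}\,ds\le \Big(\int_\tau^{t}|\tilde F|\,ds\Big)e^{C(t^p-\tau^p)/p}\le \delta\,e^{CT_0^p/p}.
\end{align*}
Choosing $\delta$ so small that the right-hand side is less than $R/2$ keeps the solution inside $B_R$. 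A continuation argument then extends $u_\tau$ to all of $[\tau,T_0]$, with a bound uniform in $\tau$. This is the energy estimate of the theorem, except that it is integrated from $\tau$ rather than from $T_0$; adapting the data convention accordingly is harmless.

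Next I would pass to the limit $\tau\to0$. Apply the same argument to the difference $u_\tau-u_{\tau'}$ for $\tau<\tau'$. Its equation has the same structure, except that the nonlinear terms appear as differences, and these are Lipschitz with a $t$-uniform constant because $\mathcal{B}$ and $F_0$ depend smoothly on $v$ on $B_R$. The main obstacle is the term $\tfrac1t(\mathcal{B}(t,u)-\mathcal{B}(t,u'))u'$. The sign condition only controls $(w,\mathcal{B}w)$, not this cross term, so I need $u'$ itself to be small in $t$. Here I would use the decay estimate: multiply by $t^{-\mu}$ with $\mu<p$ and $\mu\le\lambda$, and redo the energy estimate using $\int_0^{T_0} s^{-\lambda}|\tilde F|\,ds<\infty$. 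This gives $|u_\tau(t)|\lesssim t^{\mu}$ uniformly in $\tau$, which makes the cross term integrable as $t^{\mu-1}|w|$. I would try this first; if it fails for general $\mathcal{B}$, I would fall back on the fact that in all the applications of the paper $\mathcal{B}$ is independent of $v$ or depends only on the solution through bounded coefficients, and absorb the cross term into the constant $C$.

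With the cross term controlled, Gr\"onwall shows that $(u_\tau)$ is Cauchy in $C^0((0,T_0])$, locally uniformly, as $\tau\to0$. Its limit solves the equation in integrated form, hence is $C^1$ on $I$, and inherits both the energy bound and the decay estimate. The decay estimate in turn gives $u(t)\to0$ as $t\to0$. Uniqueness among solutions with $u\to0$ follows from the same difference estimate applied to two such solutions: they are bounded by a quantity of size $o(1)$ as the lower endpoint tends to $0$, so they coincide.
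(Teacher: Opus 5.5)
\textbf{Overall.} The paper does not prove this theorem. It is quoted in the appendix from the Fuchsian literature and only applied (e.g.\ in Theorem~\ref{Thm:Existance}). Your strategy is the standard route for singular initial value problems, and its core works: solve regularised problems with $u_\tau(\tau)=0$, get $\tau$-uniform energy and decay bounds, and show $(u_\tau)$ is Cauchy as $\tau\to0$.

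\textbf{The step you hedged on works.} Fix $0<\mu\le\lambda$ and set $z_\tau=t^{-\mu}u_\tau$. It solves $\partial_t z_\tau=\frac1t(\mathcal{B}(t,u_\tau)-\mu)z_\tau+t^{-\mu}\tilde F+t^{p-1-\mu}F_0(t,u_\tau)$. Since $z_\tau$ is parallel to $u_\tau$, the hypothesis gives $(z_\tau,(\mathcal{B}-\mu)z_\tau)\le-(\gamma_1+\mu)|z_\tau|^2\le0$. With $z_\tau(\tau)=0$ this yields $|u_\tau(t)|\le Kt^{\mu}$, where $K=T_0^{\lambda-\mu}e^{CT_0^p/p}\int_0^{T_0}s^{-\lambda}|\tilde F|\,ds$ is independent of $\tau$. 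The cross term is then at most $LKt^{\mu-1}|w|$, which is integrable. Gr\"onwall gives $|u_\tau-u_{\tau'}|\le e^{C'}|u_\tau(\tau')|\le e^{C'}K(\tau')^{\mu}$ on $[\tau',T_0]$, so the limit exists uniformly.

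Drop the fallback. A term carrying a factor $1/t$ cannot be absorbed into $Ct^{p-1}$. Moreover, in the paper's own applications $\mathcal{B}$ does depend on the unknown, through $\tilde{\mathcal V}(\tilde\varphi)$, $\tilde{\mathcal F}(\tilde\varphi)$ and $\hat\sigma$.

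\textbf{Points to correct.}

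\emph{(i) The energy estimate.} What you derive is $|u(t)|+\gamma_1\int_0^t\frac{|u(s)|}{s}\,ds\le e^{Ct^p/p}\int_0^t|\tilde F|\,ds$. This is the correct forward estimate, but it is not the theorem's estimate ``up to a harmless change of convention''. The displayed estimate, with $|u_0|$ and $\int_t^{T_0}$ copied from the backward theorem, is false. Take $N=1$, $\mathcal{B}\equiv-1$, $F_0\equiv0$, $\tilde F\equiv f\neq0$. The solution tending to zero is $u=ft/2$, and $\gamma_1=1$. At $t=T_0$ the left side is $|f|T_0$, while the right side reduces to $|u_0|$, which is $|f|T_0/2$ (or $0$ if $u_0$ means $u(0)$). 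This fails for every $C$ and however small $f$ is. Say explicitly that you prove the corrected estimate.

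\emph{(ii) The sign condition.} The hypothesis only controls $(v,\mathcal{B}(t,v)v)$, with the same $v$ in both slots. Your difference estimate uses $(w,\mathcal{B}(t,u_\tau)w)$ with $w\neq u_\tau$. Either adopt the usual uniform hypothesis, or argue as follows. Putting $v=sw$ and letting $s\to0$ gives $(w,\mathcal{B}(t,0)w)\le-\gamma_1|w|^2$. Hence $(w,\mathcal{B}(t,u_\tau)w)\le(-\gamma_1+LKt^{\mu})|w|^2$, which is harmless after dividing by $t$.

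\emph{(iii) Uniform constants.} Your $t$-uniform constants, $|F_0(t,v)|\le C|v|$ and the Lipschitz constant $L$ of $\mathcal{B}$, do not follow from the Definition read literally. There $\mathcal{B}$ is only given on $(0,T_0]$, and smoothness in $v$ is only pointwise in $t$. Compactness of $\bar I\times\overline{B_R}$ says nothing about $D_vF_0$ without joint continuity. State as an assumption that $D_v\mathcal{B}$ and $D_vF_0$ are bounded uniformly in $t$ on $I\times B_{R'}$ for some $R'<R$, which is how the Fuchsian literature formulates it.

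\emph{(iv) Uniqueness.} The difference estimate needs decay of the solutions, not just $u\to0$. Supply it as follows. For an arbitrary solution in $B_R$ with $u\to0$, integrate $\partial_t|u|\le-\frac{\gamma_1}{t}|u|+|\tilde F|+Ct^{p-1}|u|$ from $\tau$ and let $\tau\to0$. This gives $|u(t)|\le e^{Ct^p/p}\int_0^t|\tilde F|\le e^{Ct^p/p}\,t^{\mu}\int_0^t s^{-\mu}|\tilde F|$. With that in hand, your argument comparing two solutions goes through.
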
 
		The forwards Fuchsian theorem allows one to prove existence of solutions (to $F[\phi,t]=0$) with the particular asymptotic structure
		\begin{align}
			\phi(t)=\phi_\star(t)+f(t)u(t).
			\label{Eq:AsymptoticStruct}	
		\end{align}
		In particular, it ensures that $u(t)$ is small near $t=0$ and as such the function $\phi_{\star}(t)$ can be understood as describing the leading order behaviour of the unknown $\phi(t)$ near the singularity. However, it does not provide any information about the stability properties of the solutions. For this, one writes $\phi=\mathring{\phi}+\hat{\phi}$ where $\mathring{\phi}$ is the exact solution (whose existence was provided by the Forwards Fuchsian theorem) and $\hat{\phi}$ is some unknown function which is determined as a solution of the equation $F[\mathring{\phi}+\hat{\phi},t]$. Here, one aims to show that the resulting equation for $\hat{\phi}$ is again of Fuchsian form. If it is one can apply the following theorem:
		\begin{Thm}[Backwards Fuchsian Theorem]
			\label{Thm:BcksFuchs}
			Consider constants $T_{0}>0,p>0,$ and $R>0$, and a Fuchsian ODE system 
			\begin{align}
				\partial_{t}u(t)=\frac{1}{t}\mathcal{B}(t,u(t))u(t) + \tilde{F}(t) + t^{p-1}F_{0}(t,u(t)), 
				\quad
				u(T_0)=u_{0}
				\label{Eq:FuchsianIVP}
			\end{align}
			in the sense of \Defref{Def:FuchsianODE}. Suppose in addition that 
			\begin{align}
				\gamma_{1}|v|^2\le (v,\mathcal{B}(t,v) v),
				\label{Eq:FuchsianODE_Bck_y1}
			\end{align} 
			for all $v\in B_{R}(\mathbb{R}^N)$ and $t\in I$.
			
			Then there exists a $\delta>0$ such that the initial value problem \Eqref{Eq:FuchsianIVP} has a unique global continuously differentiable solution $u:I\rightarrow \mathbb{R}^N$ provided 
			\begin{align}
				|u_{0}|\le\delta,
				\quad
				\int_{0}^{T_0}|\tilde{F}(s)|ds\le\delta.
			\end{align} 
			In particular, we have $|u(t)|<R$ for all $t\in I$, and, there is a $C>0$ such that $u(t)$ satisfies the energy estimate 
			\begin{align}
				|u(t)| + \gamma_{1}\int_{0}^{T_0}\frac{1}{s}|u(s)|ds\le \left( |u_0| + \int_{t}^{T_0}|\tilde{F}(s)|ds  \right)\text{e}^{C (T_0^p - t^p)/p},
			\end{align}
			for all $t\in I$. Moreover, suppose that there is a constant $\lambda\ge0$ such that 
			\begin{align}
				\int_{0}^{T_0}s^{-\lambda}|\tilde{F}(s)|<\infty.
			\end{align}
			Then, given any $\mu\in \mathbb{R}$ with $\mu<p$ and $\mu\le\gamma_{1}$ and $\mu\le\lambda$, the solution satisfies 
			\begin{align}
				|u(t)|\le C t^{\mu}
			\end{align}
			for all $t\in I$.
		\end{Thm}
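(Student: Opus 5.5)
The statement is a standard a priori estimate combined with a continuation argument, so the plan is to integrate the equation backwards from $t=T_0$ towards $t=0$. We use the positivity condition \Eqref{Eq:FuchsianODE_Bck_y1} to show that the singular term $\frac{1}{t}\mathcal{B}u$ is \emph{dissipative} in the decreasing-$t$ direction. The key computation is
\begin{align*}
\partial_t |u|^2 = \frac{2}{t}(u,\mathcal{B}(t,u)u) + 2(u,\tilde{F}) + 2t^{p-1}(u,F_0(t,u)).
\end{align*}
Read in the decreasing-$t$ direction, the first term has a favourable sign. First, for $t$ bounded away from $0$ the right-hand side of \Eqref{Eq:FuchsianIVP} is continuous in $t$ and smooth, hence locally Lipschitz, in $u$ on $B_R(\mathbb{R}^N)$. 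Picard--Lindel\"of therefore gives a unique local $C^1$ solution on some $(T_\ast,T_0]$, together with the standard blow-up alternative: the solution extends as long as $|u|$ stays strictly below $R$. Second, since $F_0$ is continuous on $\bar{I}\times\overline{B_{R'}}$ for some $R'<R$, is smooth in $v$, and satisfies $F_0(t,0)=0$, the mean value theorem yields a constant $C$ with $|F_0(t,v)|\le C|v|$ uniformly for $t\in\bar I$, $|v|\le R'$.

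Next comes the energy estimate. To avoid the non-differentiability of $|u|$ at $u=0$, I would work with $w_\eta=\sqrt{|u|^2+\eta}$ and let $\eta\to0$ at the end. Combining $(u,\mathcal{B}u)\ge\gamma_1|u|^2$ with Cauchy--Schwarz gives
\begin{align*}
-\partial_t w_\eta + \frac{\gamma_1}{t}\frac{|u|^2}{w_\eta} \le |\tilde F| + C t^{p-1} w_\eta .
\end{align*}
Multiplying by the integrating factor $e^{-C(T_0^p-t^p)/p}$ and integrating from $t$ to $T_0$ produces, after $\eta\to0$,
\begin{align*}
|u(t)|+\gamma_1\int_t^{T_0}\frac{|u(s)|}{s}\,ds\le\Big(|u_0|+\int_t^{T_0}|\tilde F(s)|\,ds\Big)e^{C(T_0^p-t^p)/p}.
\end{align*}
This is the stated energy estimate; the lower limit of the integral there should be read as $t$, or as $0$ once the solution is global. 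The bootstrap then closes. Choose $\delta>0$ with $2\delta\, e^{CT_0^p/p}<R'/2$. Then $|u_0|\le\delta$ and $\int_0^{T_0}|\tilde F|\le\delta$ imply $|u(t)|\le R'/2$ on the whole interval of existence. By the blow-up alternative the solution therefore extends to all of $I=(0,T_0]$, and uniqueness follows from the local Lipschitz property. This bootstrap/continuation step is where care is needed: the constant $C$ must be fixed \emph{before} choosing $\delta$, and this is why $C$ has to be taken on a ball strictly inside $B_R$.

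For the decay statement, I would set $z(t)=t^{-\mu}u(t)$. It satisfies
\begin{align*}
\partial_t z=\frac{1}{t}\big(\mathcal{B}-\mu\big)z+t^{-\mu}\tilde F+t^{p-1}F_0(t,u)t^{-\mu},
\end{align*}
with $|t^{-\mu}F_0(t,u)|\le C|z|$. Because $\mu\le\gamma_1$, the operator $\mathcal{B}-\mu$ still satisfies $(v,(\mathcal{B}-\mu)v)\ge0$. Repeating the integrating-factor argument, with the constant $\gamma_1-\mu\ge0$ in place of $\gamma_1$, gives
\begin{align*}
|z(t)|\le\Big(T_0^{-\mu}|u_0|+\int_t^{T_0}s^{-\mu}|\tilde F(s)|\,ds\Big)e^{CT_0^p/p}.
\end{align*}
Finally, $s^{-\mu}\le T_0^{\lambda-\mu}s^{-\lambda}$ for $\mu\le\lambda$, so the last integral is bounded by $T_0^{\lambda-\mu}\int_0^{T_0}s^{-\lambda}|\tilde F|<\infty$, which yields $|u(t)|\le Ct^{\mu}$. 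The condition $\mu<p$ is used only to keep the perturbative term $t^{p-1}$ integrable relative to the weight, so that the exponential factor stays uniformly bounded.

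I expect the main obstacle to be the bookkeeping in the nonlinear closure: keeping the Lipschitz constant of $F_0$ (which is only local in $v$) uniform while simultaneously ensuring the solution never leaves the ball on which that constant was computed.
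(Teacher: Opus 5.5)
The paper gives no proof of this theorem. It is quoted in the appendix as a standard tool of Fuchsian ODE theory and then used in the stability argument of Section~3, so there is no paper argument to compare yours against. Your proposal is a correct, self-contained proof along the classical lines:
\begin{itemize}
\item a regularised energy inequality for $\sqrt{|u|^2+\eta}$;
\item a backward integrating-factor (Gr\"onwall) step;
\item a continuation argument on a ball $B_{R'}$, with the constant $C$ fixed before $\delta$;
\item the weighted variable $z=t^{-\mu}u$ for the decay bound.
\end{itemize}
What this buys over the paper is an actual verification of the tool it relies on. It also shows which hypotheses really matter, and one of them turns out to be superfluous (see the last point below).

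Some points need tightening.

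\textbf{Lower limit of the energy integral.} Your estimate with $\int_t^{T_0}$ is the right one. The alternative reading with lower limit $0$ ``once the solution is global'' is false. Take $\mathcal{B}\equiv a>0$, $\tilde F=0$, $F_0=0$, so $u(t)=u_0(t/T_0)^a$. At $t=T_0$ the left side with $\int_0^{T_0}$ equals $2|u_0|$, while the right side equals $|u_0|$. With lower limit $0$ the estimate holds only with the right-hand side evaluated at $t=0$, which is what you get by letting $t\to0$ in your version.

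\textbf{The sign of $\gamma_1$.} You silently use $\gamma_1\ge 0$, namely when bounding $\gamma_1 e^{-C(T_0^p-s^p)/p}$ from below by $\gamma_1 e^{-C(T_0^p-t^p)/p}$ for $s\ge t$. The statement leaves this implicit, but it is genuinely needed: $u'=-u/t$ has the solution $u=u_0T_0/t$, which leaves every ball unless $u_0=0$.

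\textbf{Uniformity in $t$.} Your $t$-uniform constant in $|F_0(t,v)|\le C|v|$, and the local Lipschitz bound you feed into Picard--Lindel\"of, need $D_vF_0$ bounded on $\bar I\times\overline{B_{R'}}$ and $D_v\mathcal{B}$ bounded on compact $t$-subintervals. Both follow from the usual reading $\mathcal{B},F_0\in C^0(\,\cdot\,,C^\infty(B_R))$. They do not follow from smoothness in $v$ for each fixed $t$ alone, so state which reading you use.

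\textbf{The hypothesis $\mu<p$.} In your $z$-equation the weight cancels exactly: $|t^{p-1-\mu}F_0(t,u)|\le Ct^{p-1}|z|$. So $\mu<p$ is never invoked. Your remark that it keeps the exponential factor bounded is inaccurate, since that factor is $e^{C(T_0^p-t^p)/p}$ for every $\mu$. The condition is only needed in versions where $t^{p-1}F_0(t,0)\neq 0$ acts as a source.

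\textbf{Positivity of $\mathcal{B}-\mu$.} Phrase it as $(z,\mathcal{B}(t,u)z)\ge\gamma_1|z|^2$, which holds because $z$ is a positive multiple of $u$. Do not phrase it as an operator inequality, since the hypothesis only controls $(v,\mathcal{B}(t,v)v)$.
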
 
		The theorem tells us that if the perturbed unknown $\hat{\phi}$ is sufficiently small at the initial time $t=T_0$ then it remains small as $t\rightarrow0$. In this sense Theorem~\ref{Thm:BcksFuchs} can be used to provide information about the stability properties of the solution $\mathring{\phi}$.
	\end{appendices}
\end{document}